\begin{document}

\newtheorem{Teo}{Theorem}
\renewcommand{\sin}{sin}

\title{Study of  Some Simple Approximations to the  Non-Interacting Kinetic Energy Functional}

\author{Edison X. Salazar}
\email{salazar.xavier@gmail.com}
% \homepage{http://stoa.usp.br/thschiavo}
\affiliation{Universidad Regional Amazónica IKIAM. Km 7 vía Muyuna, Tena,
Ecuador.}

\author{Pedro F. Guarderas}
%\email{pedro.felipe.guarderas@gmail.com}
% \homepage{http://stoa.usp.br/thschiavo}
\affiliation{Universidad Central del Ecuador.  Cdla. Universitaria, Quito, Ecuador}
\author{Eduardo V. Ludeña}
%\email{popluabe@yahoo.es}
\affiliation{
Center of Nanotechnology Research and Development (CIDNA), ESPOL, Km 30.5 vía Perimetral, Campus
Gustavo Galindo,
Guayaquil, Ecuador.}
\affiliation{Programa Prometeo, SENESCYT, Quito, Ecuador}
\affiliation{ Centro de Química,
Instituto Venezolano de Investigaciones Científicas, IVIC, Apartado 21827, Caracas 1020-A,
Venezuela}
\author{Mauricio Cornejo}
\affiliation{
Center of Nanotechnology Research and Development (CIDNA), ESPOL, Km 30.5 vía Perimetral, Campus
Gustavo Galindo,
Guayaquil, Ecuador.}
%\email{popluabe@yahoo.es}
\thanks{Permanent Address}
% footnote "Permanent Address"
\author{Valentin V. Karasiev}
%\email{vkarasev@qtp.ufl.edu}
\affiliation{Quantum Theory Project, Departments of Physics and Chemistry, P.O. Box 118435, University of Florida, Gainesville FL 32611-8435.}

\date{\today}
%\date{September 9, 2014}

\begin{abstract}

Within the framework of density functional theory, we present a study of approximations to  the enhancement factor of the
non-interacting kinetic energy functional $T_s[\rho]$.  For this purpose, we employ the model of Liu
and Parr [S. Liu and R.G. Parr, Phys. Rev. A {\bf 55}, 1792 (1997)] based on a series expansion of
$T_s[\rho]$ involving powers of the density. Applications to 34 atoms, at the Hartree-Fock level showed that the
enhancement factors present peaks that are in excellent agreement with those of the exact ones
and  give an accurate description of the shell structure of these atoms. The
application of Z-dependent expansions to represent  some of the terms of these approximation for neutral atoms and  for  positive and negative ions, which allows  $T_s[\rho]$ to be cast in a very simple form,  is also explored.  Indications are given as to how these functionals may be applied to molecules and clusters.\\

  \vspace{0.02cm}

\noindent\small\textit{Keywords:} Density Functional Theory, Enhancement Factor, Kohn-Sham,
Hartree-Fock, Kinetic Energy Functional.
\end{abstract}

\maketitle

%
% Voy a meter una serie de referencias nuevas con sus datos para que tú las incorperes a la bibliografía
%

\section{Introduction.}

One of the challenging  problems in density functional theory, DFT, is how to express the
non-interacting  kinetic energy of a quantum mechanical many-body system as a functional of the
density \cite{3.4,1.5,1.10,1.12,1.11,Apro,Iyengar-2001,Karasiev-et-al-2014-Bach,Karasiev-et-al-2015}.
%6p
%\bibitem{Iyengar-2001} OK
%S. S. Iyengar, M. Ernzerhoff, S. N. Maximoff, G. E. Scuseria
%Challenge of creating accurate and effective kinetic-energy functionals
%Phys. Rev. A,  63, 052508 (2001).
%
%\bibitem{Karasiev-et-al-2014-Bach}Ok
%V. V. Karasiev, D. Chakraborty,  S. B. Trickey
%Progress on New Approaches to Old Ideas: Orbital-Free Density Functionals
%In: V. Bach and L. Delle Site (eds.),Many-Electron Approaches in Physics, Chemistry 113
%and Mathematics, Mathematical Physics Studies, DOI: 10.1007/978-3-319-06379-9_6,
%© Springer International Publishing Switzerland 2014
%
%
%\bibitem{Karasiev-et-al-2015}Ok
%Karasiev, V. V., & Trickey, S. B. (2015). Frank Discussion of the Status of Ground-state Orbital-free DFT. arXiv preprint arXiv:%1502.04017.
%

Having such a functional is, of course, crucial  for the
implementation of the orbital-free version of  DFT.\cite{Wang-Carter-2000}
%Y.A. Wang, E.A. Carter, Theor. Meth. in Condensed Phase Chemistry, S.D. Schwartz (ed.), Kluwer, Dordrecht, The Netherlands, 2000, %p. 117-184
%
Let us recall that   in the  Kohn-Sham
\cite{1.3} version of DFT, as a result of writing the non-interacting kinetic energy  as  a
functional of the orbitals there arise  $N$ Kohn-Sham equations whose solution becomes progressively
more difficult as  $N$ gets larger. For this reason, a treatment that dispenses with  orbitals and
is based on the use of a kinetic energy functional which only depends on  the density has been
proposed as an alternate way for handling this problem. To  focus on an orbital-free functional for
the energy would  certainly lower the  computational cost and would permit to extend   the domain of
application of DFT to large many-particle systems as
all one has to solve is a single equation for the density, regardless of the value of $N$.

To find an adequate  density functional for the kinetic energy is a difficult task because, due to
the virial theorem,  the kinetic energy is  equal in magnitude to the total energy.  Hence this
functional must have the same level of accuracy as that of the total energy (in contrast with  the
exchange and correlation functionals in DFT which only comprise small portions of the total energy).
This fact probably explains why since the first works of Thomas \cite{1.1} and Fermi \cite{1.2} in
1927  and in spite of the continued   efforts carried out  throughout many decades, (for reviews see
Refs.\cite{Wang-Carter-2000,2.1,Apro}) still no definite and  satisfactory approximation to this
functional  has been found.  The search for suitable approximations is, however, an ongoing
activity.\cite{Lindmaa-2014,Higuchi-2014,2014-Laricchia,Xia-carter-2015,sergeev,DellaSala-2015,
Espinosa-Leal-2015,Finzel-2015,Li-et-al-2015} Although an exact analytical expression for the non-interactive kinetic energy
functional is still lacking, the exact form of this functional, however, is known and  can be
derived  from
general principles (see, for example, the derivation given in the context of the local-scaling
transformation version of DFT
\cite{EVL-CanJChem-1996,1.9,1.4})
%\bibitem{Ludena-et-al-CanJChem-1996}Ok -> EVL-CanJChem-1996
%Ludeña, E.V., R. López-Boada, R. Pino (1996) Approximate kinetic energy functionals
%generated by local-scaling transformations. Canadian Journal of Chemistry, 74, 1097-1105.
%
%\bibitem{Ludena-et-al-JCompuChem-1999}Ok ->1.9
%E.V. Ludeña, V. Karasiev, R. López-Boada, E. Valderrama y J. Maldonado (1999)
%Local-scaling transformation version of density functional theory. Application to
%atoms and diatomic molecules. Journal of Computational Chemistry, 20, 155-183.
%
%\bibitem{Ludena-et-al-2003-IJQC}Ok->1.4
%E.V. Ludeña, V. Karasiev, and L. Echevarría (2003) Realizations of the non-interacting
%kinetic energy functional enhancement factor through local-scaling transformations:
%Atoms. International Journal of Quantum Chemistry 91, 94-104.
%
%\bititem{Finzel-2015}
%K. Finzel, Shell-structure-based functionals for the kinetic energy, Theor. Chem. Acc. (2015) 134: 106.
%
%\bibitem{Li-et-al-2015}
%L. Li, et al. "Understanding Machine-learned Density Functionals."
%International Journal of Quantum Chemistry 2015, DOI: 10.1002/qua.25040 
%
This exact form corresponds to:
\begin{equation}
T_s[\rho]= \frac{1}{8}\int d\vec r\frac{|\nabla\rho(\vec r)|^2}{\rho(\vec r)} + \frac{1}{2}\int
d\vec r\rho^{\nicefrac{5}{3}}(\vec r)A_N[\rho(\vec r);\vec r], \label{eq:0.1}
\end{equation}
where the first term is the Weizsäcker term \cite{2.2} and where the second contains the
Thomas-Fermi function $\rho^{\nicefrac{5}{3}}(\vec r)$ times the enhancement factor
$A_N[\rho(\vec r); \vec r]$ where  $\rho(\vec r)$ is
the one-electron density of the system. Clearly, as is seen in the above expression, the challenge in modelling  $ T_s[\rho]$ is shifted to that of attaining adequate approximations for the enhancement factor $A_N[\rho(\vec r);\vec r]$, which is considered to be expressible as  a functional of $\rho$.

Among the alternatives produced over the years to represent the  non-interactive kinetic energy functional \cite{Apro}, we focus in this paper on the one introduced by
Liu and Parr \cite{3.4},  which is given as a power series  of the
density $\rho(\vec r)$. This series  generates an explicit  expression for the  enhancement factor as a functional  of the
one-particle density. A variational calculation based on this expansion has recently been given by Kristyan.\cite{Kristian-2013}
%
%\bibitem{Kristyan-2013}
%S. Kristyan, Variational calculation with general density functional to solve the electronic Schr\"odinger equation directly for a ground state: a recipe for self-consisten solution.
%J. Theoret. Appl. Phys. (2013), 7: 61.

 In this paper we analyze the representation of the
enhancement factor  given by the Liu-Parr series expansion and compare it with  the exact values extracted from an orbital expression.  This is done  for the atoms of the first, second and third row of the periodic table.  In addition, we explore the possibility of simplifying the Liu-Parr functional by introducing  $Z$-dependent expressions for some of the integrals containing $\rho(\vec r)$. Finally, bearing in mind that  the mathematical framework is presented rather concisely in the
original Liu and Parr's paper, \cite{3.4} we  include in Appendix A of the present work a more
extended demonstration of their second theorem. We expect that this may contribute to a better
understanding of the Liu and Parr approach and foster its applications.

\section{The  Enhancement Factor.}

\subsection{Some properties of the enhancement factor}
From an information theory perspective,\cite{Sears-Parr-Dinur}
%
%\bibitem{Sears-Parr-Dinur}OK
%S. B. Sears, R. G. Parr, U. Dinur,
%On the Quantum Mechanical Kinetic Energy as a Measure of the Information in a Distribution,
%Israel J. Chem. 19, 165-173 (1980)
%
the Weizsacker term in Eq. (\ref{eq:0.1}) is  local. Clearly then, since  the non-local part  of the kinetic energy functional must be embodied in the non-Weizsacker term, this non-locality must be ascribed to the  enhancement factor. In fact, as was pointed out by Ludeña,\cite{Ludena-1982}
%\bibitem{Ludena-1982}Ok
% Ludena, Eduardo V. "On the nature of the correction to the Weizsacker term." The Journal of Chemical Physics 76.6 (1982): %3157-3160.
%
the non-Weizsacker term  contains the derivative of the correlation factor for the Fermi hole (see
Eq.[38] of Ref.\cite{Ludena-1982}). Hence, the enhancement factor  contains terms responsible for
localizing electrons with the same spin in different regions of space giving rise to shell
structure.  This  phenomenon stems from the non-locality of the Fermi hole which may be described in
terms of charge depletions followed by charge accumulations producing  polarizations at different
distances.\cite{Ludena-et-al-JCP-2004}
%
%\bibitem{Ludena-et-al-JCP-2004}Ok
%Ludeña, Eduardo V., et al. "A reinterpretation of the nature of the Fermi hole." The Journal of Chemical Physics 120.2 (2004): %540-547.
%
This non-locality of the kinetic energy functional is well represented by orbital expansions
\begin{eqnarray}
T_s\left[\{\phi_i\}^{N}_{i=1}\right] &=& \frac{1}{2}\sum^{N}_{i=1}\int d\vec r\ \nabla\phi^*_i(\vec r)\nabla\phi_i(\vec r)\label{eq2}
\\
&=&
-\frac{1}{2}\sum^{N}_{i=1}\int d\vec r\ \phi^*_i(\vec r)\nabla^2\phi_i(\vec r) \label{eq3} \\
& & + \frac{1}{4}\int d\vec r\ \nabla^2\rho (\vec r)
\nonumber
\end{eqnarray}
given in terms of  gradients in Eq. (\ref{eq2}) or of  Laplacians in Eq. (\ref{eq3}).
Combining  Eqs. (1) and the gradient representation of Eq.(2), we obtain the following exact  orbital representation for the enhancement factor:
\begin{eqnarray}
 A_N\left[\rho(\vec r),\left\{\phi_i\right\};\vec r\right]&=&
 {{2}\over{\rho^{5/3}({\vec r})}}
 \Big({{1}\over{2}}\sum^{N}_{i=1}\nabla\phi^{*}_i(\vec r)\nabla\phi_i(\vec r)
 \nonumber\\
& &-{{1}\over{8}}{{\vert \nabla\rho(\vec r)\vert^2}\over{\rho(\vec r)}}
\Big)
\label{eq:0.6}
\end{eqnarray}
Obviously, when modelling the enhancement factor in terms of a series  of the one-particle density one would like to reproduce  the same characteristics it manifests in an orbital representation.  Thus, in addition to  yielding a desired accuracy for the calculated values of the non-interacting kinetic energy, the approximate enhancement factor should  satisfy the positivity condition\cite{Tal-Bader}:
\begin{equation}
 A_N\left[\rho(\vec r);\vec r\right]\geq 0  \qquad {\rm for }\quad {\rm all} \qquad  \vec r 
 \label{pos}
\end{equation}
and should also be capable of generating   shell structure.  In this respect, let us note that 
$A_N\left[\rho(\vec r),\left\{\phi_i\right\};\vec r\right]$ as given  by Eq. (\ref{eq:0.6})  differs
 by just  a constant from the function $\chi({\vec r})= D({\vec r})/D_h({\vec r})$ introduced in 
the definition of the electron localization function, ELF (see Ref. \cite{Becke-Edgecombe-1990}).
%
%\bibitem{Becke-Edgecombe-1990}
%Becke, Axel D., and Kenneth E. Edgecombe. "A simple measure of electron localization in atomic and molecular systems." The %Journal of chemical physics 92.9 (1990): 5397-5403.
%
Moreover, bearing in mind that ELF and similar functions have been successfully related to the appearance of shell structure in  atoms and molecules\cite{Savin-1997,Savin-2005-ELF,Savin-2005-Theochem,Gatti-2005,Navarrete-Lopez-2008,Contreras-Garcia-2011,Rincon-et-al-2011,DeSilva-2013,Causa-2013}
it is clear that any proposed model of the enhancement factor  must also comply with this requirement.

%In fact, as has been determined by a number of authors, \cite{1.4}\cite{C3-HAITLKE,C3-LKEIQM} the conditions that must be %satisfied by  the
%enhancement factor are (i) positivity \cite{3.5,C3-PCAITKEF} and (ii) generation of  the shell
%structure \cite{Ludena-1982}.%

% Introduce here discussion on relation to ELF

%\cite{Apro}\cite{1.4}\cite{C3-GEA,2.1,DebGhosh}.

A popular generalized gradient approximation (GGA)  for the kinetic energy takes the following form \cite{Perspectives}
%\bibitem{Perspectives}
% V.V.\ Karasiev, S.B.\ Trickey, and F.E.\ Harris,
% J.\ Comput.-Aided Mat.\ Des.\ {\bf 13}, 111 (2006).
%\cite{Apro,2.1}
%,Lapla1}:
%
%\bibitem{Lapla1}
%J. P. Perdew, L. A. Constantin Phys. Rev. B, 75, 155109 (2007).
%
\begin{equation}
  T_s[\rho] = T_W[\rho] + \int d\vec r \rho^{\nicefrac{5}{3}}(\vec r)F[s(\vec r)],\label{eq:3.4}
\end{equation}
where $T_W[\rho]$ is the Weiszäcker term \cite{2.2}, and the second term is the Pauli
term \cite{2.1} containing the GGA enhancement factor $F[s(\vec r)]$ which depends on $ s(\vec r) $
\begin{equation}
  s(\vec r) = \frac{|\nabla\rho(\vec r)|}{2(3\pi^2)^{\nicefrac{1}{3}}\rho^{\nicefrac{4}{3}}(\vec r)},
\end{equation}
The variable $s(\vec r)$, the reduced density gradient,  describes the rate of variation of the electronic density, i.e, large values of $s(\vec r)$ correspond to fast variations on the electron density and vice versa \cite{C3-GEA}. The above approximation to the Pauli term containing GGA $F[s(\vec r)]$ factors are at the basis of the conjoint gradient expansion to the kinetic energy
introduced by Lee, Lee and Parr \cite{Lee-Lee-Parr}.
%\bibitem{Lee-Lee-Parr}OK
%%H. Lee,  Ch. Lee, and Robert G. Parr, Phys. Rev. A 44, 768 (1991).
%

A full review of the functionals of the kinetic energy expressed in terms of the density and its derivatives is given by
Wesolowski \cite{Apro}.  For some more recent representations of the enhancement factor of the
non-interacting kinetic energy as a functional of $\rho$ and its derivatives $\nabla\rho$,
$\nabla^2\rho$, etc., see Refs.
\cite{Lapla1,par-z,Karasiev-et-al-2009,2014-Laricchia,sergeev,DellaSala-2015,
Espinosa-Leal-2015}

In the present work we examine, however, a different approximation to the enhancement factor, more
in line with Eq. (\ref{eq:0.1}), namely, a representation of $A_N[\rho(\vec r);\vec r]$ as a local
functional of the density.\cite{3.4}

%\section{III. Study of approximations for density functional.}

\subsection{An approximate representation of the enhancement factor}

We adopt the Liu and Parr \cite{3.4}  expansion of the non-interacting kinetic energy functional given
in terms of homogeneous functionals of the one-particle density:
\begin{equation}
  T_{LP97}[\rho]=\sum^n_{j=1}C_j\left[\int d\vec r\rho^{[1+(2/3j)]}(\vec r)\right]^j,\label{C4:18}
\end{equation}
where in $T_{LP97}[\rho]$ the subindex $LP97$ stands for Liu and Parr and the year of publication, 1997.
Following  the original work, we truncate  Eq. (\ref{C4:18}) after $j=3$ and obtain:
\begin{align}
  T_{LP97}[\rho]=&\ C_{T_1}\int d\vec r \rho^{\nicefrac{5}{3}}(\vec r)\nonumber\\
   &\ + C_{T_2}\left[\int d\vec r \rho^{\nicefrac{4}{3}}(\vec r)\right]^2\nonumber\\
   &\ +  C_{T_3}\left[\int d\vec r \rho^{\nicefrac{11}{9}}(\vec r)\right]^3 .\label{eq:0.3}
\end{align}
Liu and Parr\cite{Apro} determine the coefficients $C_{T_j}$'s by least-square fitting setting 
$\rho=\rho_{HF}$, the Hartree-Fock density and obtain:  $C_{T_1} = 3.26422$, $C_{T_2} = -0.02631$
and  $C_{T_3} = 0.000498$  (a typographical error in the values of the coefficients in the original
paper, has been corrected). Clearly, this expansion provides a very simple way to express  the
kinetic energy as a local functional of the density.
%

%\subsection{An approximate representation of the enhancement factor.}

In this context, an approximate expression for $A_N[\rho(\vec r); \vec r]$ as a functional of the one-particle density can be found from  Eqs. (\ref{eq:0.3}) and (\ref{eq:0.1}):
\begin{align}
  A_{N,Appr}[\rho] =&\ 2\left( \vphantom{\frac{1}{8}\frac{|\nabla\rho(\vec r)|^2}{\rho^{\nicefrac{8}{3}}(\vec r)}}  C_{T_1} + C_{T_2}\rho^{-\nicefrac{1}{3}}(\vec r)\int d\vec r \rho^{\nicefrac{4}{3}}(\vec r)\right.\nonumber\\
    &\ +  C_{T_3}\rho^{-\nicefrac{4}{9}}(\vec r)\left[\int d\vec r \rho^{\nicefrac{11}{9}}(\vec r)\right]^2\nonumber\\
    &\ -\left.\frac{1}{8}\frac{|\nabla\rho(\vec r)|^2}{\rho^{\nicefrac{8}{3}}(\vec r)}\right)\label{eq:0.7}
\end{align}

\subsection{Local corrections to the enhancement factor.}

From equations (\ref{eq:0.6}) and (\ref{eq:0.7}), we obtain the graphs for the exact and
approximate enhancement factors, respectively. These graphs are displayed in  Figures 1 through
6 for the  Na, Al, Ar, Fe, Ni and Kr atoms. One can see that the exact enhancement factor is a positive function, in contrast to the approximate one  which shows negative regions  in  violation of the
positivity condition which must be met by  $A_N[\rho(\vec r);\vec r]$.\cite{3.5}

Due to the fact that the kinetic energy is not uniquely defined  and, as illustrated by Eqs.
(\ref{eq2}) and (\ref{eq3}), that  there are expressions  that yield  locally different kinetic
energy densities but which  integrate to the same value, it is possible to modify the non-positive
approximate  enhancement factor by adding a term that does  not alter the expected value of the
non-interacting kinetic energy but which  locally contributes to make the enhancement factor
positive. This is an acceptable procedure in view  of the non-uniqueness in the  definition of the
local kinetic energy expressions.\cite{C3-LKEIQM,C3-RLKE,C3-HAITLKE}
%
%\bibitem{Cohen-1984}Ok->C3-RLKE
%Cohen, Leon. "Representable local kinetic energy." The Journal of chemical physics 80.9 (1984): 4277-4279.
%
%\bibitem{Cohen-1979}Ok->C3-LKEIQM
%Cohen, Leon. "Local kinetic energy in quantum mechanics." The Journal of Chemical Physics 70.2 (1979): 788-789.
%
%\bibitem{Anderson-Ayers-Rodriguez-2010}Ok->C3-HAITLKE
%Anderson, James SM, Paul W. Ayers, and Juan I. Rodriguez Hernandez. "How Ambiguous Is the Local Kinetic Energy?†." The Journal of %Physical Chemistry A 114.33 (2010): 8884-8895.

In this vein, we  add to non-interacting  kinetic energy expression  a term given by the Laplacian of the density times an arbitrary constant $\lambda$, where $\lambda$ is a real number:
\begin{align}
  T_s[\rho]=&\ C_{T_1}\int d\vec r \rho^{\nicefrac{5}{3}}(\vec r) + C_{T_2}\left[\int d\vec r \rho^{\nicefrac{4}{3}}(\vec r)\right]^2 \nonumber\\
  &+  C_{T_3}\left[\int d\vec r \rho^{\nicefrac{11}{9}}(\vec r)\right]^3 +\lambda\int d\vec r\ \nabla^2\rho(\vec r).\label{eq:0.8}
\end{align}
This leads to the following new expression for the approximate enhancement factor:
\begin{align}
  A_{N}[\rho] =&\ 2\left( \vphantom{\lambda\frac{\nabla^2\rho}{\rho^{\nicefrac{5}{3}}}} C_{T_1} + C_{T_2}\rho^{-\nicefrac{1}{3}}(\vec r)\int d\vec r \rho^{\nicefrac{4}{3}}(\vec r)\right.\nonumber\\
    &+  C_{T_3}\rho^{-\nicefrac{4}{9}}(\vec r)\left[\int d\vec r \rho^{\nicefrac{11}{9}}(\vec r)\right]^2\nonumber\\
    &+\left.\lambda\frac{\nabla^2\rho(\vec r)}{\rho(\vec r)^{\nicefrac{5}{3}}}-\frac{1}{8}\frac{|\nabla\rho(\vec r)|^2}{\rho^{\nicefrac{8}{3}}(\vec r)}\right).\label{eq:0.9}
\end{align}
With the addition of this term we see that the kinetic energy value does not alter, because the integral of the Laplacian of the density is zero.\cite{Green}
%
%(corregir en bibliografia porque la revista esta en mayúsculas)
%\bibitem{Green]
%Romanowski, Z., and S. Krukowski. "Derivation of von Weizsäcker Equation Based οn Green-Gauss Theorem." Acta Physica Polonica A 115.3 (2009): 653-655.
%
The improvements that this added term brings about on  the approximate enhancement factor for the  Na, Al, Ar, Fe, Ni and Kr atoms are shown in Figs. [\ref{fig:FactorNaApr1}] through [\ref{fig:FactorKrApr3}].\\
\begin{figure}[!htbp]
\centering\begin{tabular}{|p{8.5cm}|}
            \hline
            \includegraphics[width = 8.5cm, height = 5.5cm]{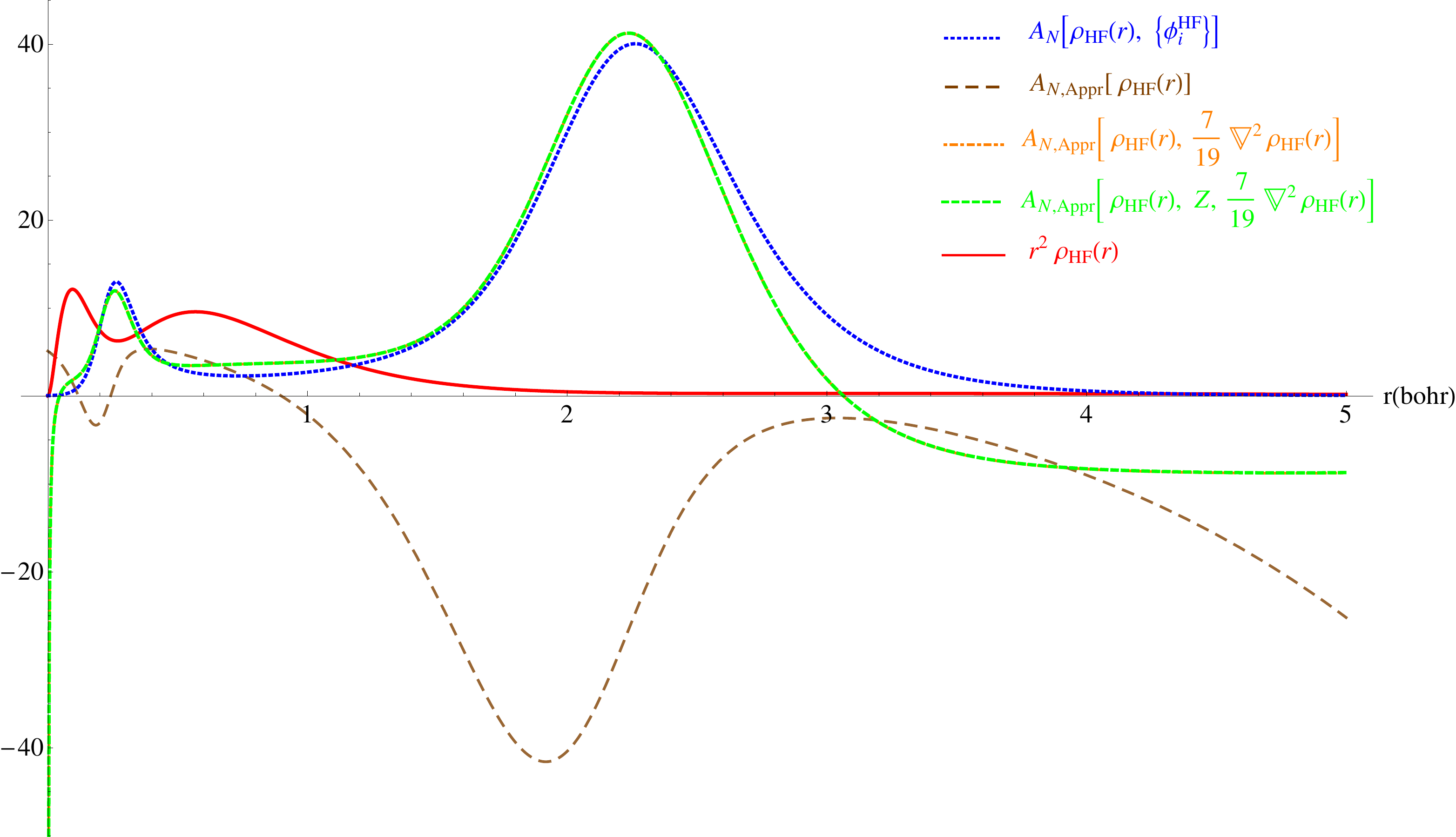}\\
            \hline
          \end{tabular}
\caption{ ``Exact'' enhancement factor (dotted blue), approximate enhancement factor (dashed brown),
approximate enhancement
factor with $\lambda = \frac{7}{19}$ (dotted green), approximate enhancement factor with 9th-degree
$Z$ polynomial and $\lambda = \frac{7}{19}$ (dotted orange), and  radial distribution function of
the
density (full red) for the Na atom}
\label{fig:FactorNaApr1}
\end{figure}
%
% \begin{figure}[!htbp]
% \centering\begin{tabular}{|p{8.5cm}|}
%             \hline
%             \includegraphics[width = 8.5cm, height = 5.5cm]{Na+.pdf}\\
%             \hline
%           \end{tabular}
% \caption{ ``Exact'' enhancement factor (dotted blue), approximate enhancement factor (dashed
brown),
% approximate enhancement
% factor with $\lambda = \frac{7}{19}$ (dotted green), approximate enhancement factor with
9th-degree
% $Z$ polynomial and $\lambda = \frac{7}{19}$ (dotted orange), approximate enhancement factor with
% 9th-degree $Z$ polynomial for the Ne atom and $\lambda = \frac{7}{19}$ (dotted magenta), and 
% radial distribution function of the density (full red) for the Na$^+$ ion}
% \label{fig:FactorNa+}
% \end{figure}
%
\begin{figure}[!htbp]
\centering\begin{tabular}{|p{8.5cm}|}
            \hline
             \includegraphics[width = 8.5cm, height = 5.5cm]{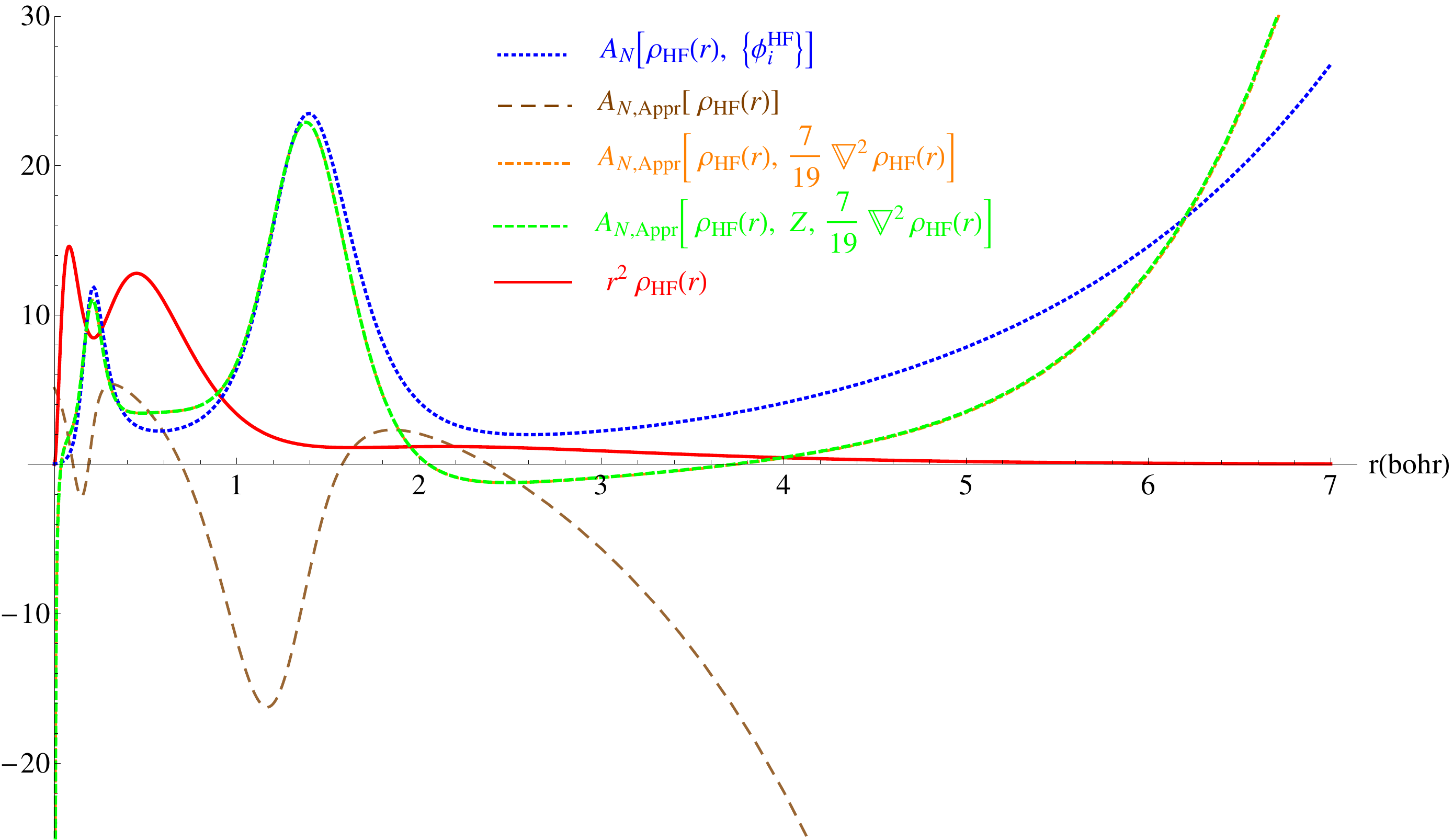}\\
            \hline
          \end{tabular}
\caption{``Exact'' enhancement factor (dotted blue), approximate enhancement factor (dashed brown),
approximate enhancement
factor with $\lambda = \frac{7}{19}$ (dotted green), approximate enhancement factor with 9th-degree
$Z$ polynomial and $\lambda = \frac{7}{19}$ (dotted orange), and 
radial distribution function of the
density (full red) for the  Al atom}
\label{fig:FactorAlApr2}
\end{figure}
%
% \begin{figure}[!htbp]
% \centering\begin{tabular}{|p{8.5cm}|}
%             \hline
%              \includegraphics[width = 8.5cm, height = 5.5cm]{Al+.pdf}\\
%             \hline
%           \end{tabular}
% \caption{``Exact'' enhancement factor (dotted blue), approximate enhancement factor (dashed
brown),
% approximate enhancement
% factor with $\lambda = \frac{7}{19}$ (dotted green), approximate enhancement factor with
9th-degree
% $Z$ polynomial and $\lambda = \frac{7}{19}$ (dotted orange), approximate enhancement factor with
% 9th-degree $Z$ polynomial for the Mg atom and $\lambda = \frac{7}{19}$ (dotted magenta), and 
% radial distribution function of
% the
% density (full red) for the  Al$^+$ ion}
% \label{fig:FactorAl+}
% \end{figure}
%
\begin{figure}[!htbp]
\centering\begin{tabular}{|p{8.5cm}|}
            \hline
            \includegraphics[width = 8.5cm, height = 5.5cm]{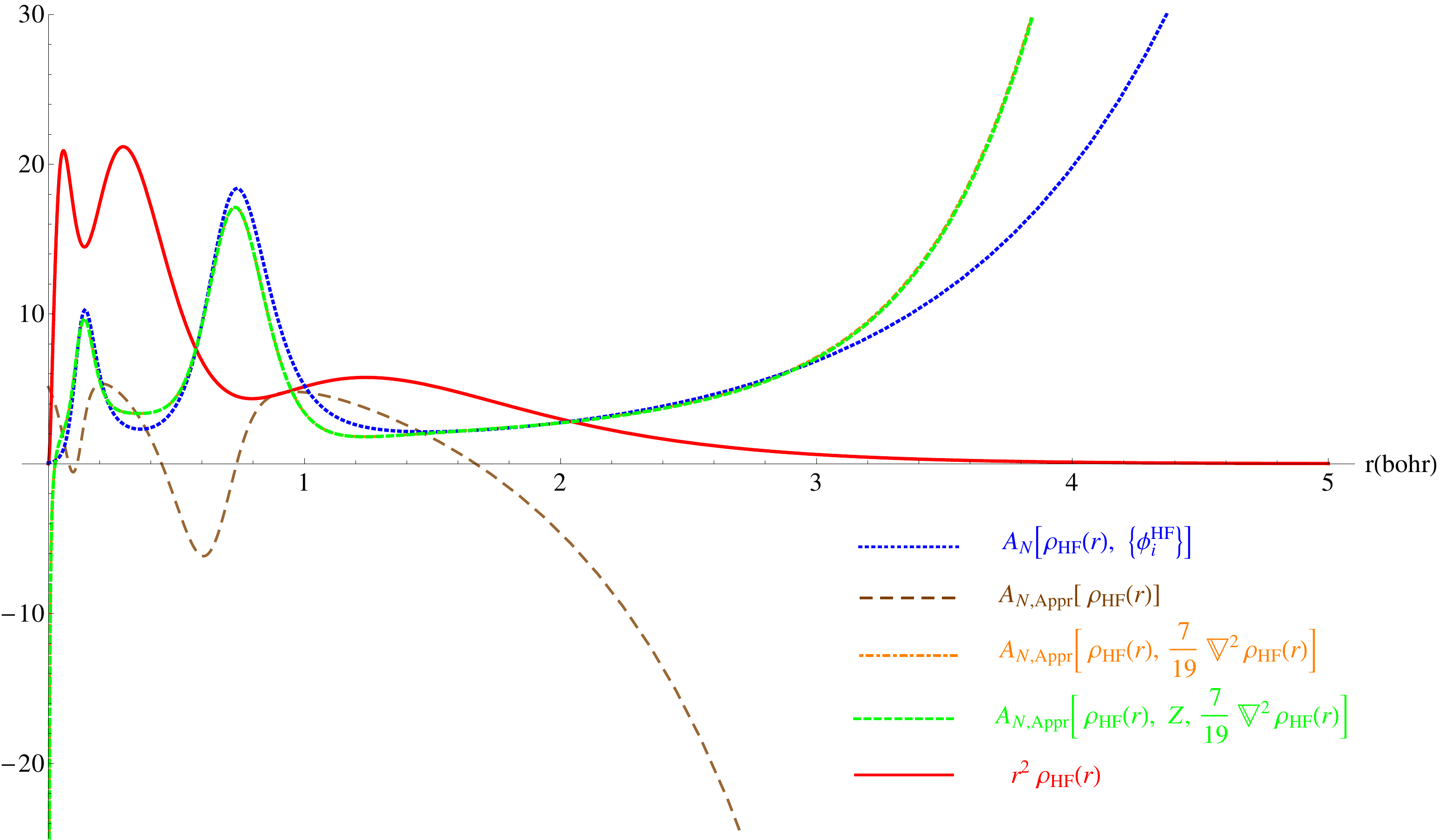}\\
            \hline
          \end{tabular}
\caption{``Exact'' enhancement factor (dotted blue), approximate enhancement factor (dashed brown),
approximate enhancement
factor with $\lambda = \frac{7}{19}$ (dotted green), approximate enhancement factor with 9th-degree
$Z$ polynomial and $\lambda = \frac{7}{19}$ (dotted orange), and  radial distribution function of
the
density (full red)for the Ar  atom}
\label{fig:FactorArApr3}
\end{figure}
%
% \begin{figure}[!htbp]
% \centering\begin{tabular}{|p{8.5cm}|}
%             \hline
%             \includegraphics[width = 8.5cm, height = 5.5cm]{Ar+.pdf}\\
%             \hline
%           \end{tabular}
% \caption{``Exact'' enhancement factor (dotted blue), approximate enhancement factor (dashed
brown),
% approximate enhancement
% factor with $\lambda = \frac{7}{19}$ (dotted green), approximate enhancement factor with
9th-degree
% $Z$ polynomial and $\lambda = \frac{7}{19}$ (dotted orange), approximate enhancement factor with
% 9th-degree $Z$ polynomial for the Cl atom and $\lambda = \frac{7}{19}$ (dotted magenta), and 
% radial distribution function of
% the
% density (full red)for the Ar$^+$ ion}
% \label{fig:FactorAr+}
% \end{figure}
%
\begin{figure}[!htbp]
\centering\begin{tabular}{|p{8.5cm}|}
            \hline
            \includegraphics[width = 8.5cm, height = 5.5cm]{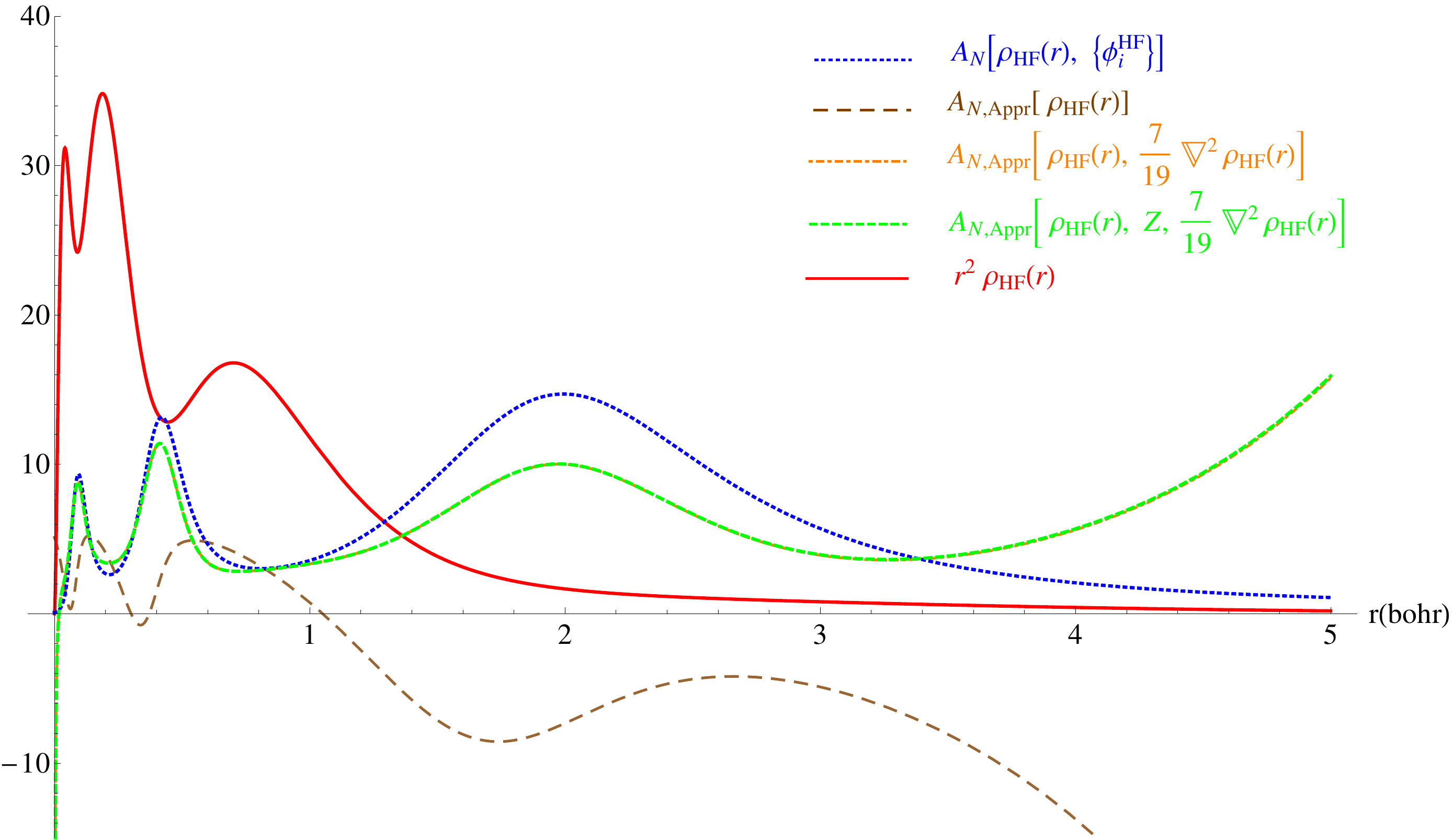}\\
            \hline
          \end{tabular}
\caption{``Exact'' enhancement factor (dotted blue), approximate enhancement factor (dashed brown),
approximate enhancement
factor with $\lambda = \frac{7}{19}$ (dotted green), approximate enhancement factor with 9th-degree
$Z$ polynomial and $\lambda = \frac{7}{19}$ (dotted orange), and  radial distribution function of
the
density (full red) for the Fe  atom}
\label{fig:FactorFeApr3}
\end{figure}
%
% \begin{figure}[!htbp]
% \centering\begin{tabular}{|p{8.5cm}|}
%             \hline
%             \includegraphics[width = 8.5cm, height = 5.5cm]{Fe+.pdf}\\
%             \hline
%           \end{tabular}
% \caption{``Exact'' enhancement factor (dotted blue), approximate enhancement factor (dashed
brown),
% approximate enhancement
% factor with $\lambda = \frac{7}{19}$ (dotted green), approximate enhancement factor with
9th-degree
% $Z$ polynomial and $\lambda = \frac{7}{19}$ (dotted orange), approximate enhancement factor with
% 9th-degree $Z$ polynomial for the Mn atom and $\lambda = \frac{7}{19}$ (dotted magenta), and 
% radial distribution function of
% the density (full red) for the Fe$^+$ ion}
% \label{fig:FactorFe+}
% \end{figure}
%
\begin{figure}[!htbp]
\centering\begin{tabular}{|p{8.5cm}|}
            \hline
            \includegraphics[width = 8.5cm, height = 5.5cm]{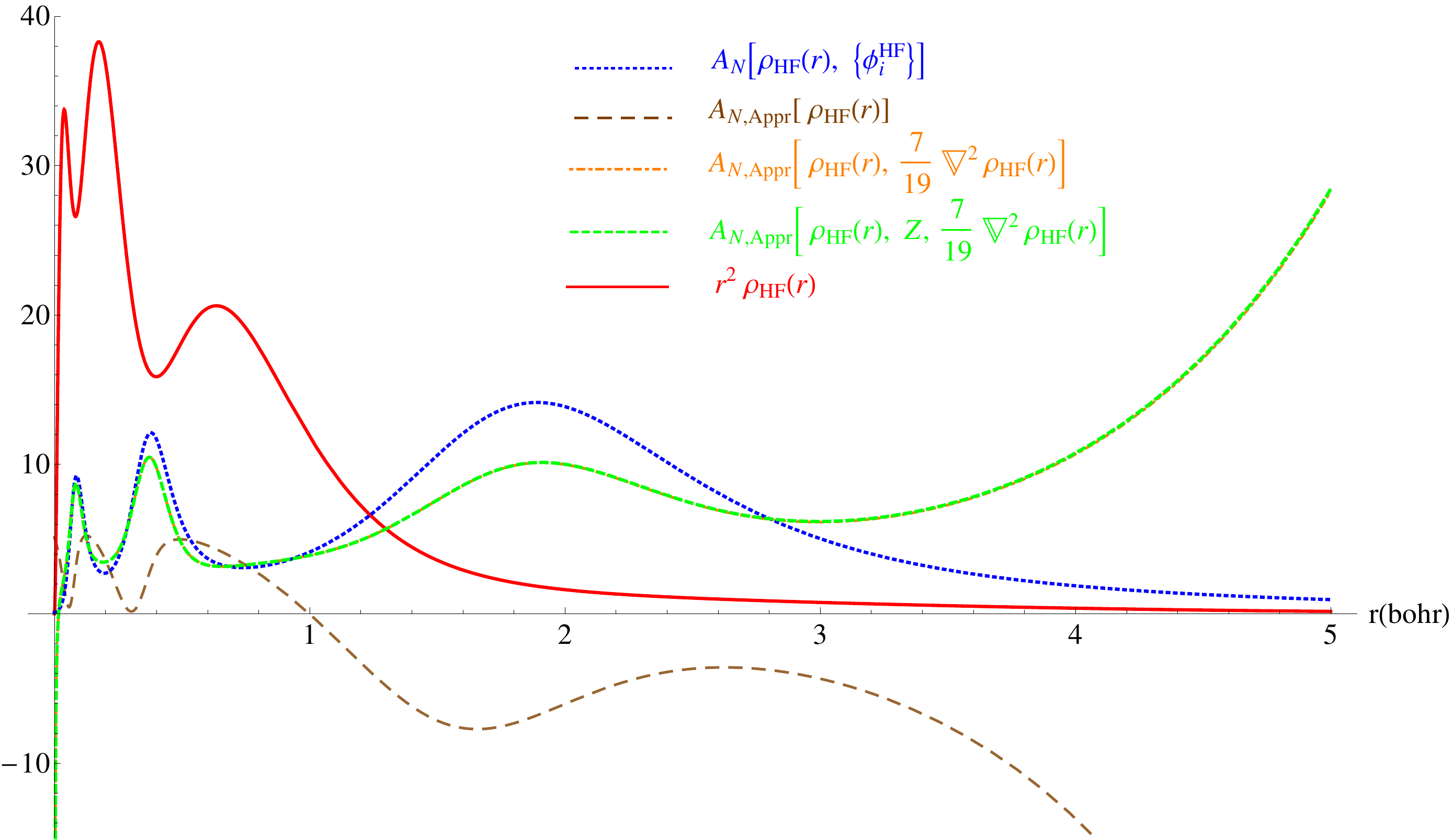}\\
            \hline
          \end{tabular}
\caption{``Exact'' enhancement factor (dotted blue), approximate enhancement factor (dashed brown),
approximate enhancement
factor with $\lambda = \frac{7}{19}$ (dotted green), approximate enhancement factor with 9th-degree
$Z$ polynomial and $\lambda = \frac{7}{19}$ (dotted orange), and  radial distribution function of
the
density (full red) for the Ni atom}
\label{fig:FactorNiApr3}
\end{figure}
%
% \begin{figure}[!htbp]
% \centering\begin{tabular}{|p{8.5cm}|}
%             \hline
%             \includegraphics[width = 8.5cm, height = 5.5cm]{Ni+.pdf}\\
%             \hline
%           \end{tabular}
% \caption{``Exact'' enhancement factor (dotted blue), approximate enhancement factor (dashed
brown),
% approximate enhancement
% factor with $\lambda = \frac{7}{19}$ (dotted green), approximate enhancement factor with
9th-degree
% $Z$ polynomial and $\lambda = \frac{7}{19}$ (dotted orange), approximate enhancement factor with
% 9th-degree $Z$ polynomial for the Co atom and $\lambda = \frac{7}{19}$ (dotted magenta), and 
% radial distribution function of
% the density (full red) for the Ni$^+$ ion}
% \label{fig:FactorNi+}
% \end{figure}
%
\begin{figure}[!htbp]
\centering\begin{tabular}{|p{8.5cm}|}
            \hline
            \includegraphics[width = 8.5cm, height = 5.5cm]{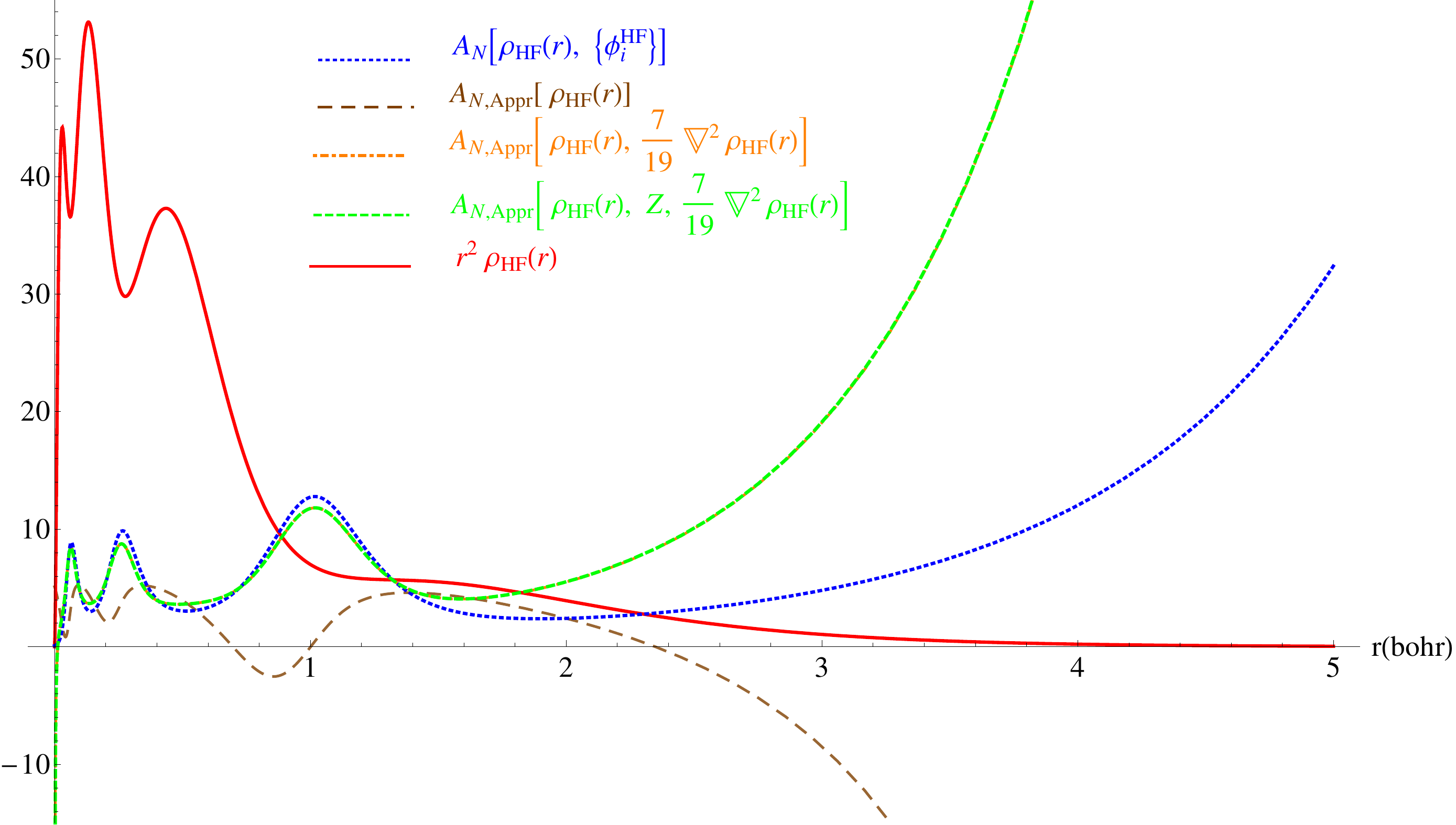}\\
            \hline
          \end{tabular}
\caption{``Exact'' enhancement factor (dotted blue), approximate enhancement factor (dashed brown),
approximate enhancement
factor with $\lambda = \frac{7}{19}$ (dotted green), approximate enhancement factor with 9th-degree
$Z$ polynomial and $\lambda = \frac{7}{19}$ (dotted orange), and  radial distribution function of
the
density (full red) for the Kr atom}
\label{fig:FactorKrApr3}
\end{figure}
%
% \begin{figure}[!htbp]
% \centering\begin{tabular}{|p{8.5cm}|}
%             \hline
%             \includegraphics[width = 8.5cm, height = 5.5cm]{Kr+.pdf}\\
%             \hline
%           \end{tabular}
% \caption{``Exact'' enhancement factor (dotted blue), approximate enhancement factor (dashed brown),
% approximate enhancement
% factor with $\lambda = \frac{7}{19}$ (dotted green), approximate enhancement factor with 9th-degree
% $Z$ polynomial and $\lambda = \frac{7}{19}$ (dotted orange), approximate enhancement factor with
% 9th-degree $Z$ polynomial for the Br atom and $\lambda = \frac{7}{19}$ (dotted magenta), and 
% radial distribution function of the density (full red) for the Kr$^+$ ion}
% \label{fig:FactorKr+}
% \end{figure}

We note that the new $\lambda$-dependent  enhancement factors closely reproduce the behavior of the exact ones in the regions  where the highest peaks are located.   In all cases the agreement is extremely good both for the first and second shells.  For Ni and Fe, the approximate enhancement factors are slightly below the exact ones in the region corresponding to the third shell.  For Kr, however, the agreement is quite good for all shells.

Concerning the asymptotic behavior of the  enhancement factors, we observe that in the region where $r \rightarrow 0$ the approximate  $\lambda$-dependent enhancement factors  become rapidly negative in all cases studied. It is expected, however, that   
this divergence in $A_N$ will not necessarily  produce a problem for interatomic forces as these 
forces in molecular dynamics are calculated by using the orbital-free analogue of
the Hellmann-Feynman theorem (see Eqs. (21)-(23) in Ref. \cite{ProfessAtQE})
%\bibitem{ProfessAtQE}
%V.V.\ Karasiev, T.\ Sjostrom, and S.B.\ Trickey,
%Computer Phys.\ Commun.\ \textbf{185}, 3240 (2014).
According to these equations the forces are defined by the density and the external potential and thus, they  do not depend on the kinetic energy density local behavior.
%i.e. the divergence in A_N not necessarily will produce a problem for interatomic forces
%This divergence in the region close to the nucleus poses a serious impediment in terms of the use  of these approximate %functionals in  orbital-free molecular dynamics, as the forces at the nuclei would not be properly represented.  
On the other hand, in the  region where $r$  becomes large, i.e., outside the atomic shells, the behavior of the approximate factor follows the trend of the exact ones for the cases of Na, Al, Ar and Kr, although in the latter case, the approximate factor grows more pronouncedly that the exact one.  In the case of Ni and Fe, however, we observe a divergence in the behavior of the tails of the approximate enhancement factors. Let us mention, however, that divergences in the tail region are not relevant and do not contribute to the kinetic energy value due to the fact that  these divergences are suppressed by the
exponentially decaying density tail.

%However, it is worth mentioning that the enhancement factor shows a divergence for the regions where the density varies very %fast, i.e, regions close to the nucleus and tail density, as it can see in all the figures, this behavior also is reported in %previous researches \cite{Lapla1} \cite{Lapla}.

\section{Approximation to the enhancement factor through $Z$-dependent polynomials}

We see that the enhancement factor $A_{N,Appr}$, Eq. (\ref{eq:0.7}), depends on  two integrals, $\int
d\vec r\ \rho^{\nicefrac{4}{3}}(\vec r)$ and $\int d\vec r\ \rho^{\nicefrac{11}{9}}(\vec r)$. 
The values of these integrals, evaluated with $\rho=\rho_{HF}$,  are  functions of the
atomic number $Z$.  We have selected to display this $Z$-dependent behavior in Figs. [\ref{fig:43p}] and [\ref{fig:119n}].
In these figures, the dots lying on the blue lines represent the values of the $4/3$ and $11/9$ integrals for the neutral atoms, respectively.  These values are interpolated using    the polynomial expansions:
$\int d\vec r\ \rho^{\nicefrac{4}{3}}(\vec r)\approx P_{\nicefrac{4}{3}}(Z^n)$ and
$\int d\vec r\ \rho^{\nicefrac{11}{9}}(\vec r)\approx P_{\nicefrac{11}{9}}(Z^n)$ where $n$ is the 
degree of the $Z$ polynomial. The blue lines  in frames (a) of Figs. [\ref{fig:43p}]  and [\ref{fig:119n}] represents the  approximations given by the third-degree polynomial $Z^3$. Similarly, the blue lines in frames (b) of these figures correspond to the  $Z^9$ polynomial approximation. 

%%%%%%%%%%%%%%%%%%%%%%%%%%%%%%%_Graphs_of_polynomials_for_ions +_and_atoms_%%%%%%%%%%%%%%%%%%%%%%%%%
    \begin{figure*}[!htbp]
        \begin{minipage}[l]{1.0\columnwidth}
            \centering\begin{tabular}{|p{8.5cm}|}
            \hline
             \includegraphics[width = 8.5cm, height = 5.5cm]{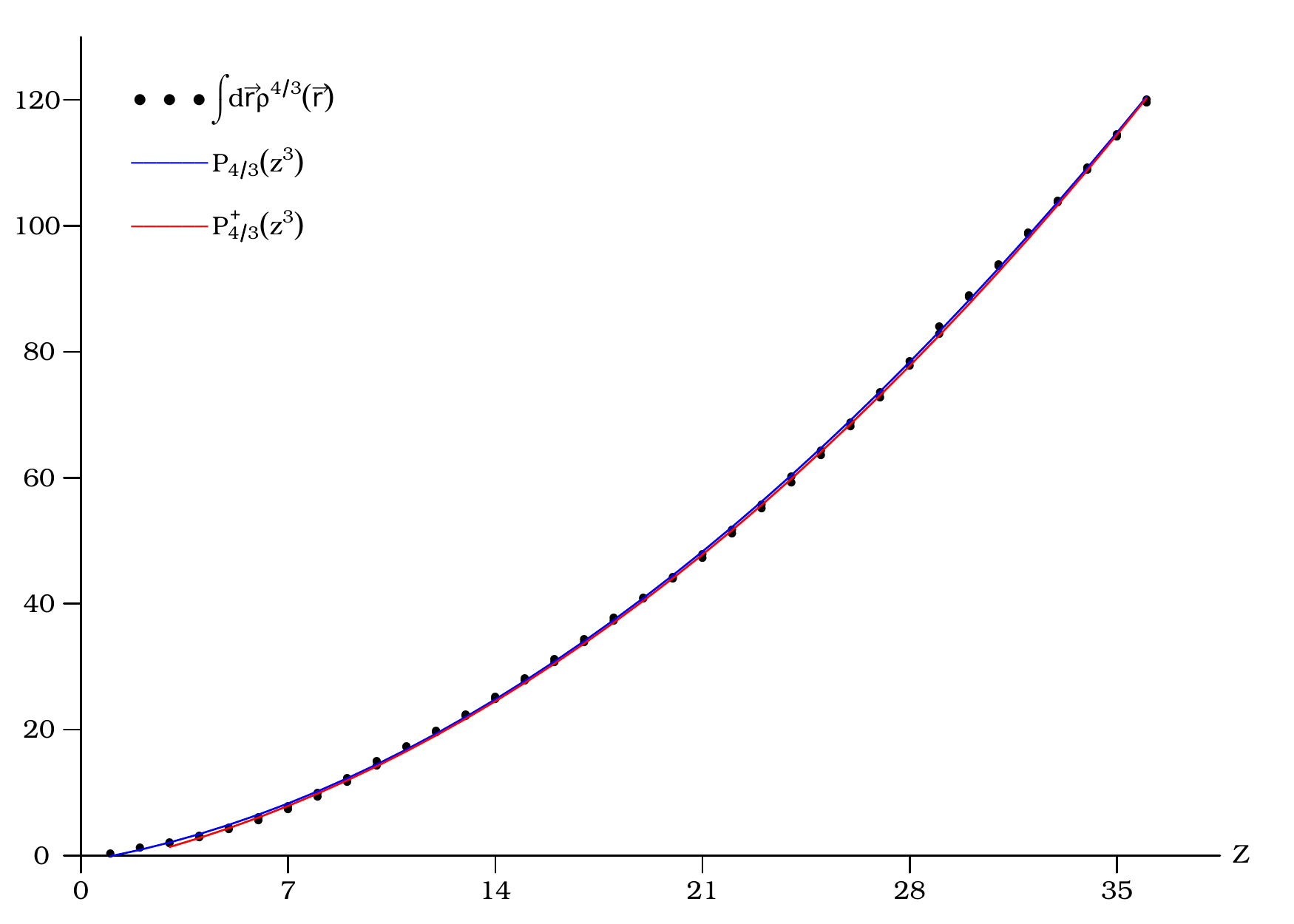}\\
             \hline
          \end{tabular}\\
          \vspace{0.5cm} (a)
        \end{minipage}
        \hfill{}
        \begin{minipage}[r]{1.0\columnwidth}
           \centering\begin{tabular}{|p{8.5cm}|}
            \hline
             \includegraphics[width = 8.5cm, height = 5.5cm]{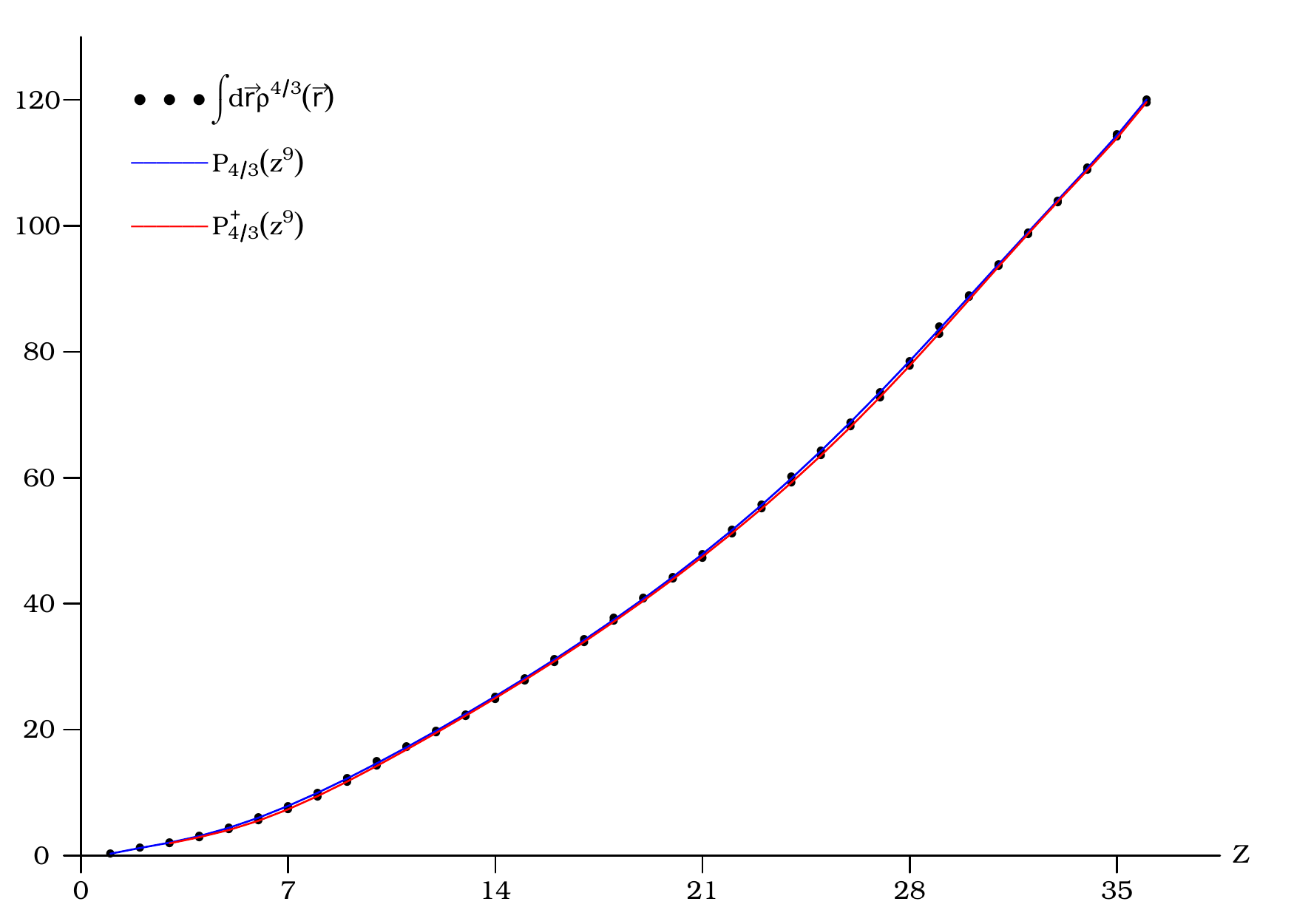}\\
             \hline
          \end{tabular}\\
          \vspace{0.5cm} (b)
        \end{minipage}
        \caption{Interpolation curves for the values of the $\int d\vec r\
\rho^{\nicefrac{4}{3}}(\vec r)$ through: (a) a 3rd degree polynomial  $P_{\nicefrac{4}{3}}(Z^3)$
(full blue) for 36 atoms and a 3rd degree polynomial  $P^+_{\nicefrac{4}{3}}(Z^3)$ (full red) for
34 positive ions, and (b) a 9th degree polynomial $P_{\nicefrac{4}{3}}(Z^9)$ (full blue) for 36
atoms and a 9th degree polynomial $P^+_{\nicefrac{4}{3}}(Z^9)$ (full red) for 34 positive
ions.}\label{fig:43p}
    \end{figure*}
     \begin{figure*}[!htbp]
        \begin{minipage}[l]{1.0\columnwidth}
      \centering\begin{tabular}{|p{8.5cm}|}
	      \hline
		\includegraphics[width = 8.5cm, height = 5.5cm]{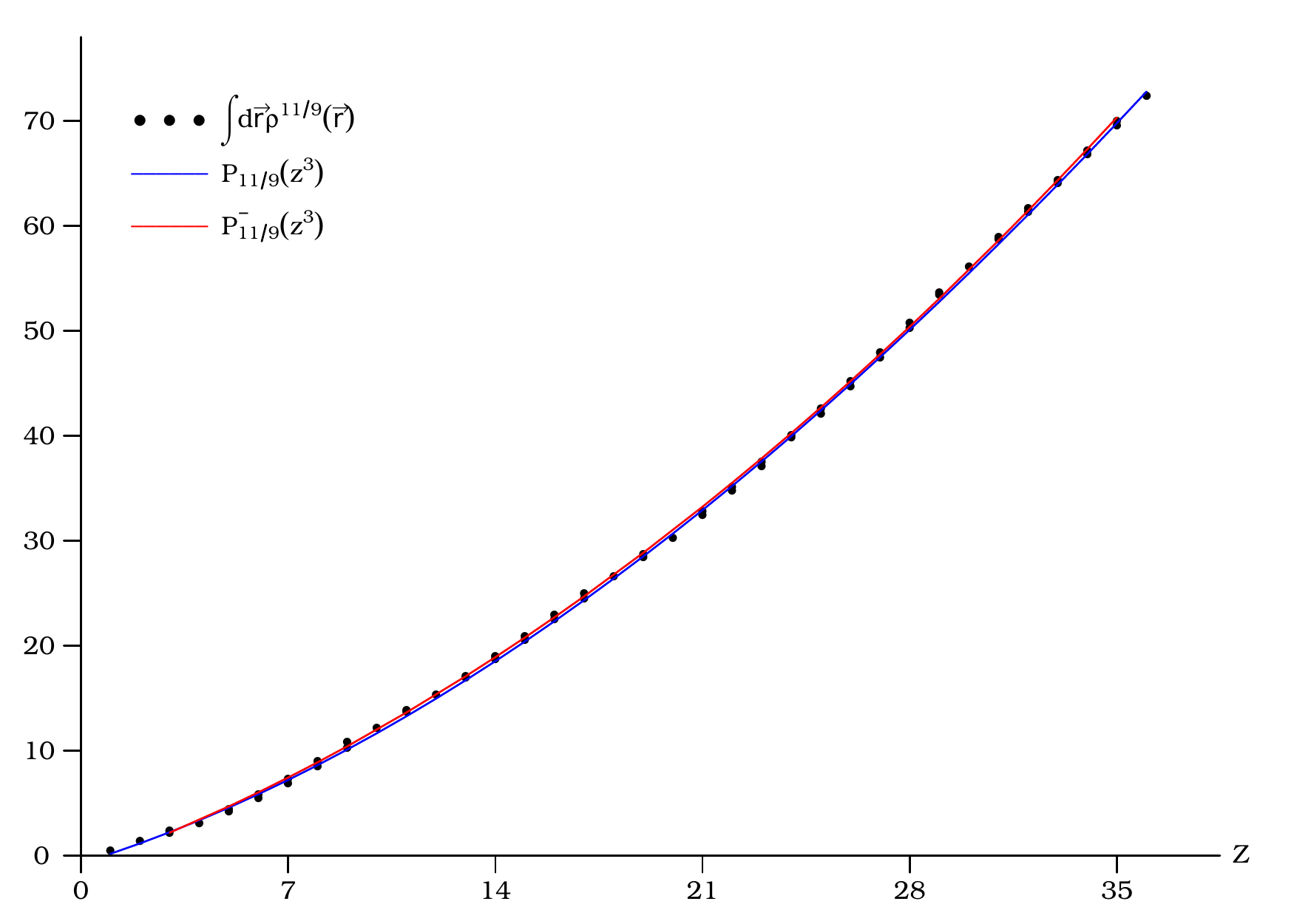}\\
		\hline
          \end{tabular}\\
          \vspace{0.5cm} (c)
        \end{minipage}
        \hfill{}
        \begin{minipage}[r]{1.0\columnwidth}
           \centering\begin{tabular}{|p{8.5cm}|}
	      \hline
		\includegraphics[width = 8.5cm, height = 5.5cm]{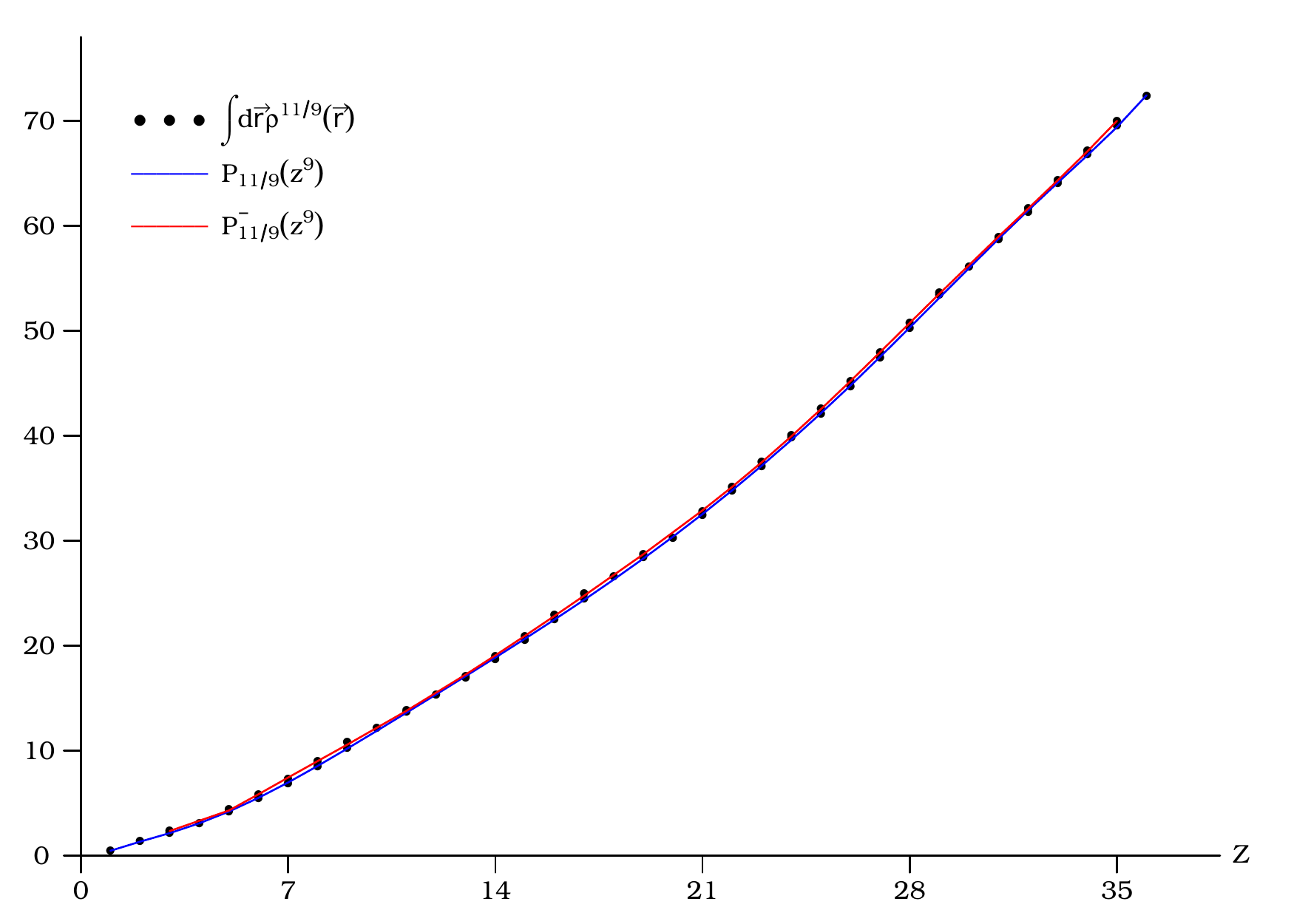}\\
		\hline
          \end{tabular}\\
          \vspace{0.5cm} (d)
        \end{minipage}
         \caption{Interpolation curves for the values of the $\int d\vec r\
\rho^{\nicefrac{11}{9}}(\vec r)$ through: (a) a 3rd degree polynomial  $P_{\nicefrac{11}{9}}(Z^3)$
(full blue) for 36 atoms and a 3rd degree polynomial  $P^-_{\nicefrac{11}{9}}(Z^3)$ (full red) for
27 negative ions, and (b) a 9th degree polynomial $P_{\nicefrac{11}{9}}(Z^9)$ (full blue) for 36
atoms and a 9th degree polynomial $P^-_{\nicefrac{11}{9}}(Z^9)$ (full red) for 27 negative
ions.}\label{fig:119n}
    \end{figure*}
    
%%%%%%%%%%%%%%%%%%%%%%%%%%%%%%%%%%%%%%%%%%%%%%%%%%%%%%%%%%%%%%%%%%%%%%%%%%%%%%%%%%%%%%%%%%%%%%%%%%%%
    
%From these figures  it is clear that there exists a correlation between the values of the integrals 
%and $Z$.  We may, therefore,  approximate these values by the polynomial expansions:
%$\int d\vec r\ \rho^{\nicefrac{4}{3}}(\vec r)\approx P_{\nicefrac{4}{3}}(Z^n)$ and
%$\int d\vec r\ \rho^{\nicefrac{11}{9}}(\vec r)\approx P_{\nicefrac{11}{9}}(Z^n)$ where $n$ is the 
%degree of the $Z$ polynomial.  

These interpolation  polynomials are explicitly defined by:
\begin{align}
 P_{\nicefrac{4}{3}}(Z^3)  = &\ -0.9691803682 + 0.7854208699Z\nonumber\\
   &\  + 0.0776145852Z^2 - 0.0001581219Z^3,\label{eq:p1}
\end{align}
\begin{align}
  P_{\nicefrac{4}{3}}(Z^9)= &\ -1.0960551055 + 1.8518814624Z \nonumber\\
  &\ - 0.5991519550Z^2 + 0.1549675741Z^3\nonumber\\
  &\  -  0.0180687925Z^4 + 0.0012312619Z^5\nonumber\\
  &\ 	- 0.0000517284Z^6 + 0.0000013252Z^7\nonumber\\
  &\  - 0.0000000190Z^8 + 0.0000000001Z^9\label{eq:p1.1}
\end{align}
\begin{align}
 P_{\nicefrac{11}{9}}(Z^3)  = &\ -0.7540383360 + 0.8813316184Z\nonumber\\
  &\ +	0.0373453207Z^2 - 0.0001408691Z^3,\label{eq:p2}
\end{align}
\begin{align}
P_{\nicefrac{11}{9}}(Z^9)  = &\ -0.8077949490 + 1.6355990588Z\nonumber\\
  &\ - 0.4837629283Z^2 + 0.1255298989Z^3\nonumber\\
  &\ - 0.0150967704Z^4 + 0.0010441963Z^5\nonumber\\
  &\ - 0.0000439707Z^6 + 0.0000011196Z^7\nonumber\\
  &\ - 0.0000000159Z^8 + 0.0000000001Z^9\label{eq:p2.1}
\end{align}
where the coefficients are determined by least-square fitting.
Thus, the enhancement factor takes the following form (where $n$ is the degree of the $Z$ polynomial):
\begin{eqnarray}
 {{1}\over{2}} A_{Z^n,Appr}[\rho,Z] &=&  C_{T_1} +
  C_{T_2}\rho^{-\nicefrac{1}{3}}(\vec r)P_{\nicefrac{4}{3}}(Z^n)\nonumber\\
  & & +C_{T_3}\rho^{-\nicefrac{4}{9}}(\vec r)P_{\nicefrac{11}{9}}(Z^n)^2\nonumber\\
&&+\lambda\frac{\nabla^2\rho(\vec r)}{\rho(\vec
r)^{\nicefrac{5}{3}}}\nonumber\\
&&- \frac{1}{8}\frac{|\nabla\rho(\vec r)|^2}{\rho^{\nicefrac{8}{3}}(\vec
r)}\label{eq:6.3}
\end{eqnarray}

Also, this leads to the following approximation for the non-interacting kinetic energy functional
\begin{eqnarray}
  T_{LP97 + Z^n}[\rho,Z] &=& C_{T_1}\int d\vec r \rho^{\nicefrac{5}{3}}(\vec
r)\nonumber\\
  & & +C_{T_2}\int d\vec r \rho^{\nicefrac{4}{3}}(\vec r)P_{\nicefrac{4}{3}}(Z^n)\nonumber\\
 & &  +C_{T_3}\int d\vec r \rho^{\nicefrac{11}{9}}(\vec r)P_{\nicefrac{11}{9}}(Z^n)^2
 \label{eq:7.3}
\end{eqnarray}

%and

%\begin{align}
%  T_{LP97 + Z^9}[\rho,Z] =&\ C_{T_1}\int d\vec r \rho^{\nicefrac{5}{3}}(\vec r) + C_{T_2}\int d%\vec r \rho^{\nicefrac{4}{3}}(\vec r)\nonumber\\
%  &\ \times(-1.0960551055 + 1.8518814624Z\nonumber\\
%  &\  - 0.5991519550Z^2 + 0.1549675741Z^3\nonumber\\
%  &\   -  0.0180687925Z^4 + 0.0012312619Z^5\nonumber\\
%  &\  - 0.0000517284Z^6 + 0.0000013252Z^7\nonumber\\
%  &\  - 0.0000000190Z^8 + 0.0000000001Z^9\nonumber\\
%  &\ + C_{T_3}\int d\vec r \rho^{\nicefrac{11}{9}}(\vec r)(-0.8077949490 \nonumber\\
%  &\ + 1.6355990588Z - 0.4837629283Z^2 \nonumber\\
%  &\ + 0.1255298989Z^3 - 0.0150967704Z^4 \nonumber\\
%  &\ + 0.0010441963Z^5 - 0.0000439707Z^6 \nonumber\\
%  &\ + 0.0000011196Z^7 - 0.0000000159Z^8 \nonumber\\
%  &\ + 0.0000000001Z^9 + \lambda\frac{\nabla^2\rho(\vec r)}{\rho(\vec r)^{\nicefrac{5}{3}}}.%%\label{eq:7.9}
%\end{align}

\subsection{Application to neutral atoms}

The kinetic energy values corresponding to these $Z$-$\lambda$-dependent  functionals $T_{LP97 + Z^3}$ and  $T_{LP97 + Z^9}$ evaluated both with the Liu and Parr  and with newly optimized coefficients are presented in Table
\ref{Tab:0.4}.   Also in this table the values of the  Liu and Parr \cite{3.4} functional $T_{LP97}$ and the values of $T_{HF}$ reported by Clementi and Roetti,\cite{3.6}
are presented for comparison purposes. The  percentage relative errors are taken with respect to
$T_{HF}$.

The graphs  of these  new $Z$-$\lambda$-dependent  enhancement factors  (\ref{eq:6.3})	 are also plotted  in  Figs.  [\ref{fig:FactorNaApr1}] through
[\ref{fig:FactorKrApr3}].  Let us note that the graphs corresponding to these enhancement factors (dotted orange) coincide with those of the locally adjusted $\lambda$-dependent factors (dotted green) and hence they  are undistinguishable  in these figures.

\begin{table*}[!htbp]

\begin{center}
 \centering
  \caption{Non-interacting kinetic energy values for neutral atoms corresponding to the functionals  $T_{LP97}$, $T_{LP97 + Z^3}$, $T_{LP97 + Z^9}$,     and $T_{HF}$.}
\label{Tab:0.4}
  \smallskip
  \begin{minipage}{18cm}
   % \centering
\smallskip
\scalebox{0.95}[0.90]{
\begin{tabular}{lrrrrrr}
\hline\hline
  Atoms & {\hspace{0.23cm}$T_{LP97}$\footnote{$T_{LP97}$, Eq. (9), with Liu-Parr coefficients}} (error\%)& 
  {\hspace{0.23cm}$T_{LP97 + Z^3}$\footnote{$T_{LP97 + Z^3}$, Eq. (18), with reoptimized
coefficients ($C_{T_1}$ = 3.1336517827, $C_{T_2}$ = -0.0043445677 and $C_{T_3}$ = -0.0000345496)}}
(error\%)& {\hspace{0.23cm}$T_{LP97 + Z^9}$\footnote{$T_{LP97 + Z^9}$, Eq. (18), with
reoptimized coefficients ($C_{T_1}$ = 3.1257333712, $C_{T_2}$ = -0.0030202454 and $C_{T_3}$ =
-0.0000669074)}} (error\%)& {\hspace{0.23cm}$T_{LP97 + Z^3}$\footnote{$T_{LP97 + Z^3}$, Eq. (18), with  Liu-Parr coefficients}} (error\%)& {\hspace{0.23cm}$T_{LP97 + Z^9}$\footnote{$T_{LP97 + Z^9}$, Eq. (18), with Liu-Parr coefficients}} (error\%)& {\hspace{0.3cm}$T_{HF}$\footnote{$T_{HF}$, Eq. (2), reported by Clementi and Roetti\cite{3.6}}}\\
  \hline
  H & 0.327 (34.600 \hspace{-0.3em}) & 0.316 (36.800 \hspace{-0.3em}) & 0.314 (37.200 \hspace{-0.3em}) & 0.328 (34.400 \hspace{-0.3em}) & 0.326 (34.800 \hspace{-0.3em}) & 0.500\\ \noalign{\smallskip} 
He & 2.875 (0.454 \hspace{-0.3em}) & 2.791 (2.481 \hspace{-0.3em}) & 2.783 (2.760 \hspace{-0.3em}) & 2.890 (0.978 \hspace{-0.3em}) & 2.875 (0.454 \hspace{-0.3em}) & 2.862\\ \noalign{\smallskip} 
Li & 7.487 (0.726 \hspace{-0.3em}) & 7.271 (2.179 \hspace{-0.3em}) & 7.258 (2.354 \hspace{-0.3em}) & 7.485 (0.700 \hspace{-0.3em}) & 7.488 (0.740 \hspace{-0.3em}) & 7.433\\ \noalign{\smallskip} 
Be & 14.682 (0.748 \hspace{-0.3em}) & 14.277 (2.031 \hspace{-0.3em}) & 14.261 (2.141 \hspace{-0.3em}) & 14.639 (0.453 \hspace{-0.3em}) & 14.689 (0.796 \hspace{-0.3em}) & 14.573\\ \noalign{\smallskip} 
B & 24.496 (0.135 \hspace{-0.3em}) & 23.873 (2.674 \hspace{-0.3em}) & 23.856 (2.744 \hspace{-0.3em}) & 24.401 (0.522 \hspace{-0.3em}) & 24.507 (0.090 \hspace{-0.3em}) & 24.529\\ \noalign{\smallskip} 
C & 37.400 (0.764 \hspace{-0.3em}) & 36.592 (2.908 \hspace{-0.3em}) & 36.570 (2.966 \hspace{-0.3em}) & 37.302 (1.024 \hspace{-0.3em}) & 37.450 (0.632 \hspace{-0.3em}) & 37.688\\ \noalign{\smallskip} 
N & 53.852 (1.009 \hspace{-0.3em}) & 52.761 (3.015 \hspace{-0.3em}) & 52.728 (3.075 \hspace{-0.3em}) & 53.666 (1.351 \hspace{-0.3em}) & 53.819 (1.070 \hspace{-0.3em}) & 54.401\\ \noalign{\smallskip} 
O & 74.165 (0.861 \hspace{-0.3em}) & 72.912 (2.536 \hspace{-0.3em}) & 72.859 (2.607 \hspace{-0.3em}) & 74.025 (1.048 \hspace{-0.3em}) & 74.139 (0.896 \hspace{-0.3em}) & 74.809\\ \noalign{\smallskip} 
F & 98.982 (0.430 \hspace{-0.3em}) & 97.731 (1.688 \hspace{-0.3em}) & 97.649 (1.770 \hspace{-0.3em}) & 99.073 (0.338 \hspace{-0.3em}) & 99.105 (0.306 \hspace{-0.3em}) & 99.409\\ \noalign{\smallskip} 
Ne & 128.900 (0.275 \hspace{-0.3em}) & 127.557 (0.770 \hspace{-0.3em}) & 127.438 (0.863 \hspace{-0.3em}) & 129.144 (0.464 \hspace{-0.3em}) & 129.065 (0.403 \hspace{-0.3em}) & 128.547\\ \noalign{\smallskip} 
Na & 162.550 (0.428 \hspace{-0.3em}) & 161.064 (0.490 \hspace{-0.3em}) & 160.909 (0.586 \hspace{-0.3em}) & 162.838 (0.606 \hspace{-0.3em}) & 162.635 (0.481 \hspace{-0.3em}) & 161.857\\ \noalign{\smallskip} 
Mg & 200.660 (0.526 \hspace{-0.3em}) & 199.075 (0.268 \hspace{-0.3em}) & 198.886 (0.363 \hspace{-0.3em}) & 200.992 (0.692 \hspace{-0.3em}) & 200.674 (0.533 \hspace{-0.3em}) & 199.610\\ \noalign{\smallskip} 
Al & 243.190 (0.545 \hspace{-0.3em}) & 241.528 (0.142 \hspace{-0.3em}) & 241.315 (0.230 \hspace{-0.3em}) & 243.527 (0.684 \hspace{-0.3em}) & 243.120 (0.516 \hspace{-0.3em}) & 241.872\\ \noalign{\smallskip} 
Si & 290.360 (0.523 \hspace{-0.3em}) & 288.697 (0.052 \hspace{-0.3em}) & 288.472 (0.130 \hspace{-0.3em}) & 290.713 (0.646 \hspace{-0.3em}) & 290.254 (0.487 \hspace{-0.3em}) & 288.848\\ \noalign{\smallskip} 
P & 342.360 (0.483 \hspace{-0.3em}) & 340.741 (0.008 \hspace{-0.3em}) & 340.522 (0.056 \hspace{-0.3em}) & 342.702 (0.583 \hspace{-0.3em}) & 342.240 (0.448 \hspace{-0.3em}) & 340.714\\ \noalign{\smallskip} 
S & 399.390 (0.475 \hspace{-0.3em}) & 397.812 (0.078 \hspace{-0.3em}) & 397.617 (0.029 \hspace{-0.3em}) & 399.643 (0.539 \hspace{-0.3em}) & 399.228 (0.434 \hspace{-0.3em}) & 397.502\\ \noalign{\smallskip} 
Cl & 461.400 (0.421 \hspace{-0.3em}) & 460.189 (0.158 \hspace{-0.3em}) & 460.037 (0.125 \hspace{-0.3em}) & 461.816 (0.512 \hspace{-0.3em}) & 461.499 (0.443 \hspace{-0.3em}) & 459.464\\ \noalign{\smallskip} 
Ar & 528.920 (0.400 \hspace{-0.3em}) & 528.019 (0.229 \hspace{-0.3em}) & 527.929 (0.212 \hspace{-0.3em}) & 529.371 (0.485 \hspace{-0.3em}) & 529.199 (0.453 \hspace{-0.3em}) & 526.814\\ \noalign{\smallskip} 
K & 601.320 (0.360 \hspace{-0.3em}) & 600.499 (0.223 \hspace{-0.3em}) & 600.489 (0.221 \hspace{-0.3em}) & 601.470 (0.385 \hspace{-0.3em}) & 601.478 (0.386 \hspace{-0.3em}) & 599.164\\ \noalign{\smallskip} 
Ca & 678.810 (0.304 \hspace{-0.3em}) & 678.040 (0.190 \hspace{-0.3em}) & 678.123 (0.202 \hspace{-0.3em}) & 678.540 (0.264 \hspace{-0.3em}) & 678.750 (0.295 \hspace{-0.3em}) & 676.756\\ \noalign{\smallskip} 
Sc & 761.070 (0.176 \hspace{-0.3em}) & 760.702 (0.128 \hspace{-0.3em}) & 760.883 (0.151 \hspace{-0.3em}) & 760.650 (0.121 \hspace{-0.3em}) & 761.066 (0.175 \hspace{-0.3em}) & 759.733\\ \noalign{\smallskip} 
Ti & 848.790 (0.045 \hspace{-0.3em}) & 849.099 (0.081 \hspace{-0.3em}) & 849.372 (0.114 \hspace{-0.3em}) & 848.447 (0.005 \hspace{-0.3em}) & 849.047 (0.075 \hspace{-0.3em}) & 848.408\\ \noalign{\smallskip} 
V & 942.190 (0.073 \hspace{-0.3em}) & 943.030 (0.016 \hspace{-0.3em}) & 943.378 (0.052 \hspace{-0.3em}) & 941.736 (0.122 \hspace{-0.3em}) & 942.473 (0.043 \hspace{-0.3em}) & 942.883\\ \noalign{\smallskip} 
Cr & 1041.300 (0.196 \hspace{-0.3em}) & 1042.751 (0.057 \hspace{-0.3em}) & 1043.144 (0.020 \hspace{-0.3em}) & 1040.800 (0.244 \hspace{-0.3em}) & 1041.604 (0.167 \hspace{-0.3em}) & 1043.348\\ \noalign{\smallskip} 
Mn & 1147.000 (0.248 \hspace{-0.3em}) & 1148.759 (0.095 \hspace{-0.3em}) & 1149.148 (0.062 \hspace{-0.3em}) & 1146.177 (0.320 \hspace{-0.3em}) & 1146.958 (0.252 \hspace{-0.3em}) & 1149.857\\ \noalign{\smallskip} 
Fe & 1258.600 (0.295 \hspace{-0.3em}) & 1261.033 (0.103 \hspace{-0.3em}) & 1261.360 (0.077 \hspace{-0.3em}) & 1257.872 (0.353 \hspace{-0.3em}) & 1258.532 (0.301 \hspace{-0.3em}) & 1262.328\\ \noalign{\smallskip} 
Co & 1376.900 (0.326 \hspace{-0.3em}) & 1380.061 (0.097 \hspace{-0.3em}) & 1380.260 (0.083 \hspace{-0.3em}) & 1376.425 (0.361 \hspace{-0.3em}) & 1376.874 (0.328 \hspace{-0.3em}) & 1381.405\\ \noalign{\smallskip} 
Ni & 1501.900 (0.329 \hspace{-0.3em}) & 1505.871 (0.066 \hspace{-0.3em}) & 1505.884 (0.065 \hspace{-0.3em}) & 1501.900 (0.329 \hspace{-0.3em}) & 1502.075 (0.318 \hspace{-0.3em}) & 1506.862\\ \noalign{\smallskip} 
Cu & 1634.400 (0.278 \hspace{-0.3em}) & 1638.995 (0.002 \hspace{-0.3em}) & 1638.785 (0.011 \hspace{-0.3em}) & 1634.893 (0.248 \hspace{-0.3em}) & 1634.776 (0.255 \hspace{-0.3em}) & 1638.960\\ \noalign{\smallskip} 
Zn & 1773.600 (0.239 \hspace{-0.3em}) & 1778.280 (0.025 \hspace{-0.3em}) & 1777.860 (0.001 \hspace{-0.3em}) & 1774.250 (0.202 \hspace{-0.3em}) & 1773.877 (0.223 \hspace{-0.3em}) & 1777.842\\ \noalign{\smallskip} 
Ga & 1919.400 (0.200 \hspace{-0.3em}) & 1923.884 (0.033 \hspace{-0.3em}) & 1923.330 (0.004 \hspace{-0.3em}) & 1920.187 (0.159 \hspace{-0.3em}) & 1919.643 (0.188 \hspace{-0.3em}) & 1923.254\\ \noalign{\smallskip} 
Ge & 2072.700 (0.128 \hspace{-0.3em}) & 2075.981 (0.030 \hspace{-0.3em}) & 2075.442 (0.004 \hspace{-0.3em}) & 2072.940 (0.117 \hspace{-0.3em}) & 2072.346 (0.145 \hspace{-0.3em}) & 2075.360\\ \noalign{\smallskip} 
As & 2232.200 (0.091 \hspace{-0.3em}) & 2234.766 (0.024 \hspace{-0.3em}) & 2234.444 (0.009 \hspace{-0.3em}) & 2232.779 (0.065 \hspace{-0.3em}) & 2232.254 (0.089 \hspace{-0.3em}) & 2234.237\\ \noalign{\smallskip} 
Se & 2399.100 (0.032 \hspace{-0.3em}) & 2400.099 (0.010 \hspace{-0.3em}) & 2400.173 (0.013 \hspace{-0.3em}) & 2399.625 (0.010 \hspace{-0.3em}) & 2399.257 (0.025 \hspace{-0.3em}) & 2399.865\\ \noalign{\smallskip} 
Br & 2573.600 (0.045 \hspace{-0.3em}) & 2572.359 (0.003 \hspace{-0.3em}) & 2572.813 (0.015 \hspace{-0.3em}) & 2573.947 (0.059 \hspace{-0.3em}) & 2573.782 (0.052 \hspace{-0.3em}) & 2572.432\\ \noalign{\smallskip} 
Kr & 2755.700 (0.133 \hspace{-0.3em}) & 2751.565 (0.018 \hspace{-0.3em}) & 2751.863 (0.007 \hspace{-0.3em}) & 2755.845 (0.138 \hspace{-0.3em}) & 2755.967 (0.142 \hspace{-0.3em}) & 2752.048\\ \noalign{\smallskip} 

  \hline
  %\input{tabla2_re.tex}
 % \hline
  MAD & 0.222 \hspace{0.1em} & 0.129 \hspace{0.1em} & 0.166 \hspace{0.1em} & 0.329 \hspace{0.1em} & 0.224 \hspace{0.1em} & \\ \noalign{\smallskip} 

  \hline\hline
 % \input{stan_des.tex}
 % \hline\hline
\end{tabular}}
\end{minipage}
\end{center}
\end{table*}

\begin{table*}[!htbp]

\begin{center}
 \centering
  \caption{Non-interacting kinetic energy values for positive ions  corresponding to the functionals  $T_{LP97}$,
$T_{LP97 + Z^3}$, $T_{LP97 + Z^9}$,     and $T_{HF}$.}
\label{Tab:0.5}
  \smallskip
  \begin{minipage}{18cm}
   % \centering
\smallskip
\scalebox{0.95}[0.90]{
\begin{tabular}{lrrrrrr}
\hline\hline
  Ions & {\hspace{0.23cm}$T_{LP97}$\footnote{$T_{LP97}$, Eq. (9), with Liu-Parr coefficients for neutral atoms }} (error\%)&
{\hspace{0.23cm}$T_{LP97}$\footnote{$T_{LP97}$, Eq. (9), with reoptimized
coefficients ($C_{T_1}$ =  3.1288539558, $C_{T_2}$ = -0.0034574267 and $C_{T_3}$ = -0.0000591469)}}
(error\%)&{\hspace{0.23cm}$T_{LP97 + Z^9}$\footnote{$T_{LP97 + Z^9}$, Eq. (18), with both $Z^9$ and  Liu-Parr coefficients for neutral atoms}} (error\%)&{\hspace{0.23cm}$T_{LP97 + Z^9}$\footnote{$T_{LP97 + Z^9}$, Eq. (18), with $Z^9$ for neutral atoms and 
reoptimized coefficients ($C_{T_1}$ = 3.1267059586, $C_{T_2}$ = -0.0039716465 and $C_{T_3}$ =
-0.0000518778)}} (error\%)&{\hspace{0.23cm}$T_{LP97 + Z^3}$\footnote{$T_{LP97 + Z^3}$, Eq. (18), with $Z^3$ for positive ions and  reoptimized
coefficients ($C_{T_1}$ = 3.1370019499, $C_{T_2}$ = -0.0048541396 and $C_{T_3}$ = -0.0000240227)}}
(error\%)& {\hspace{0.3cm}$T_{HF}$\footnote{$T_{HF}$, Eq. (2), reported
by  Clementi and Roetti\cite{3.6}}}\\
  \hline
  Li$^+$ & 7.346 (1.520 \hspace{-0.3em}) & 7.118 (1.631 \hspace{-0.3em}) & 7.334 (1.354 \hspace{-0.3em}) & 7.125 (1.534 \hspace{-0.3em}) & 7.140 (1.327 \hspace{-0.3em}) & 7.236\\ \noalign{\smallskip} 
Be$^+$ & 14.369 (0.637 \hspace{-0.3em}) & 13.945 (2.332 \hspace{-0.3em}) & 14.343 (0.455 \hspace{-0.3em}) & 13.956 (2.255 \hspace{-0.3em}) & 13.973 (2.136 \hspace{-0.3em}) & 14.278\\ \noalign{\smallskip} 
B$^+$ & 24.348 (0.458 \hspace{-0.3em}) & 23.678 (2.306 \hspace{-0.3em}) & 24.294 (0.235 \hspace{-0.3em}) & 23.695 (2.236 \hspace{-0.3em}) & 23.708 (2.183 \hspace{-0.3em}) & 24.237\\ \noalign{\smallskip} 
C$^+$ & 37.076 (0.579 \hspace{-0.3em}) & 36.150 (3.062 \hspace{-0.3em}) & 36.968 (0.869 \hspace{-0.3em}) & 36.166 (3.019 \hspace{-0.3em}) & 36.177 (2.990 \hspace{-0.3em}) & 37.292\\ \noalign{\smallskip} 
N$^+$ & 53.345 (1.008 \hspace{-0.3em}) & 52.164 (3.199 \hspace{-0.3em}) & 53.166 (1.340 \hspace{-0.3em}) & 52.178 (3.173 \hspace{-0.3em}) & 52.192 (3.147 \hspace{-0.3em}) & 53.888\\ \noalign{\smallskip} 
O$^+$ & 73.401 (1.306 \hspace{-0.3em}) & 71.990 (3.203 \hspace{-0.3em}) & 73.143 (1.653 \hspace{-0.3em}) & 72.000 (3.189 \hspace{-0.3em}) & 72.024 (3.157 \hspace{-0.3em}) & 74.372\\ \noalign{\smallskip} 
F$^+$ & 97.889 (0.953 \hspace{-0.3em}) & 96.302 (2.559 \hspace{-0.3em}) & 97.592 (1.254 \hspace{-0.3em}) & 96.313 (2.548 \hspace{-0.3em}) & 96.357 (2.503 \hspace{-0.3em}) & 98.831\\ \noalign{\smallskip} 
Ne$^+$ & 127.498 (0.249 \hspace{-0.3em}) & 125.829 (1.555 \hspace{-0.3em}) & 127.282 (0.418 \hspace{-0.3em}) & 125.866 (1.526 \hspace{-0.3em}) & 125.936 (1.471 \hspace{-0.3em}) & 127.816\\ \noalign{\smallskip} 
Na$^+$ & 162.437 (0.469 \hspace{-0.3em}) & 160.810 (0.537 \hspace{-0.3em}) & 162.470 (0.490 \hspace{-0.3em}) & 160.910 (0.475 \hspace{-0.3em}) & 161.010 (0.413 \hspace{-0.3em}) & 161.678\\ \noalign{\smallskip} 
Mg$^+$ & 200.463 (0.548 \hspace{-0.3em}) & 198.700 (0.336 \hspace{-0.3em}) & 200.374 (0.504 \hspace{-0.3em}) & 198.778 (0.297 \hspace{-0.3em}) & 198.905 (0.233 \hspace{-0.3em}) & 199.370\\ \noalign{\smallskip} 
Al$^+$ & 243.131 (0.603 \hspace{-0.3em}) & 241.285 (0.161 \hspace{-0.3em}) & 242.945 (0.526 \hspace{-0.3em}) & 241.348 (0.134 \hspace{-0.3em}) & 241.492 (0.075 \hspace{-0.3em}) & 241.673\\ \noalign{\smallskip} 
Si$^+$ & 290.220 (0.573 \hspace{-0.3em}) & 288.343 (0.077 \hspace{-0.3em}) & 289.938 (0.475 \hspace{-0.3em}) & 288.389 (0.061 \hspace{-0.3em}) & 288.540 (0.009 \hspace{-0.3em}) & 288.566\\ \noalign{\smallskip} 
P$^+$ & 342.124 (0.523 \hspace{-0.3em}) & 340.297 (0.014 \hspace{-0.3em}) & 341.803 (0.429 \hspace{-0.3em}) & 340.342 (0.001 \hspace{-0.3em}) & 340.485 (0.041 \hspace{-0.3em}) & 340.344\\ \noalign{\smallskip} 
S$^+$ & 398.990 (0.458 \hspace{-0.3em}) & 397.291 (0.031 \hspace{-0.3em}) & 398.678 (0.380 \hspace{-0.3em}) & 397.349 (0.045 \hspace{-0.3em}) & 397.465 (0.075 \hspace{-0.3em}) & 397.169\\ \noalign{\smallskip} 
Cl$^+$ & 460.947 (0.415 \hspace{-0.3em}) & 459.465 (0.092 \hspace{-0.3em}) & 460.707 (0.362 \hspace{-0.3em}) & 459.554 (0.111 \hspace{-0.3em}) & 459.627 (0.127 \hspace{-0.3em}) & 459.044\\ \noalign{\smallskip} 
Ar$^+$ & 528.301 (0.386 \hspace{-0.3em}) & 527.142 (0.166 \hspace{-0.3em}) & 528.228 (0.372 \hspace{-0.3em}) & 527.291 (0.194 \hspace{-0.3em}) & 527.305 (0.196 \hspace{-0.3em}) & 526.271\\ \noalign{\smallskip} 
K$^+$ & 601.150 (0.357 \hspace{-0.3em}) & 600.404 (0.233 \hspace{-0.3em}) & 601.311 (0.384 \hspace{-0.3em}) & 600.637 (0.271 \hspace{-0.3em}) & 600.579 (0.262 \hspace{-0.3em}) & 599.011\\ \noalign{\smallskip} 
Ca$^+$ & 678.633 (0.305 \hspace{-0.3em}) & 678.139 (0.232 \hspace{-0.3em}) & 678.559 (0.294 \hspace{-0.3em}) & 678.256 (0.249 \hspace{-0.3em}) & 678.118 (0.229 \hspace{-0.3em}) & 676.569\\ \noalign{\smallskip} 
Sc$^+$ & 761.309 (0.244 \hspace{-0.3em}) & 761.077 (0.213 \hspace{-0.3em}) & 760.885 (0.188 \hspace{-0.3em}) & 761.032 (0.207 \hspace{-0.3em}) & 760.813 (0.178 \hspace{-0.3em}) & 759.459\\ \noalign{\smallskip} 
Ti$^+$ & 848.965 (0.107 \hspace{-0.3em}) & 849.283 (0.145 \hspace{-0.3em}) & 848.552 (0.059 \hspace{-0.3em}) & 849.221 (0.138 \hspace{-0.3em}) & 848.929 (0.103 \hspace{-0.3em}) & 848.054\\ \noalign{\smallskip} 
V$^+$ & 942.467 (0.003 \hspace{-0.3em}) & 943.441 (0.100 \hspace{-0.3em}) & 942.141 (0.038 \hspace{-0.3em}) & 943.380 (0.094 \hspace{-0.3em}) & 943.039 (0.058 \hspace{-0.3em}) & 942.496\\ \noalign{\smallskip} 
Cr$^+$ & 1041.397 (0.143 \hspace{-0.3em}) & 1043.016 (0.012 \hspace{-0.3em}) & 1041.026 (0.178 \hspace{-0.3em}) & 1042.902 (0.001 \hspace{-0.3em}) & 1042.546 (0.033 \hspace{-0.3em}) & 1042.887\\ \noalign{\smallskip} 
Mn$^+$ & 1146.500 (0.249 \hspace{-0.3em}) & 1148.804 (0.048 \hspace{-0.3em}) & 1146.107 (0.283 \hspace{-0.3em}) & 1148.631 (0.063 \hspace{-0.3em}) & 1148.312 (0.091 \hspace{-0.3em}) & 1149.359\\ \noalign{\smallskip} 
Fe$^+$ & 1258.002 (0.325 \hspace{-0.3em}) & 1261.002 (0.087 \hspace{-0.3em}) & 1257.622 (0.355 \hspace{-0.3em}) & 1260.772 (0.105 \hspace{-0.3em}) & 1260.548 (0.123 \hspace{-0.3em}) & 1262.102\\ \noalign{\smallskip} 
Co$^+$ & 1376.016 (0.360 \hspace{-0.3em}) & 1379.581 (0.102 \hspace{-0.3em}) & 1375.505 (0.397 \hspace{-0.3em}) & 1379.216 (0.128 \hspace{-0.3em}) & 1379.151 (0.133 \hspace{-0.3em}) & 1380.989\\ \noalign{\smallskip} 
Ni$^+$ & 1500.922 (0.365 \hspace{-0.3em}) & 1505.084 (0.088 \hspace{-0.3em}) & 1500.588 (0.387 \hspace{-0.3em}) & 1504.711 (0.113 \hspace{-0.3em}) & 1504.858 (0.103 \hspace{-0.3em}) & 1506.416\\ \noalign{\smallskip} 
Cu$^+$ & 1632.914 (0.339 \hspace{-0.3em}) & 1637.470 (0.061 \hspace{-0.3em}) & 1632.779 (0.347 \hspace{-0.3em}) & 1637.113 (0.083 \hspace{-0.3em}) & 1637.498 (0.059 \hspace{-0.3em}) & 1638.468\\ \noalign{\smallskip} 
Zn$^+$ & 1772.639 (0.276 \hspace{-0.3em}) & 1777.364 (0.011 \hspace{-0.3em}) & 1773.437 (0.232 \hspace{-0.3em}) & 1777.679 (0.007 \hspace{-0.3em}) & 1778.277 (0.041 \hspace{-0.3em}) & 1777.553\\ \noalign{\smallskip} 
Ga$^+$ & 1918.639 (0.229 \hspace{-0.3em}) & 1923.234 (0.010 \hspace{-0.3em}) & 1919.421 (0.189 \hspace{-0.3em}) & 1923.362 (0.016 \hspace{-0.3em}) & 1924.076 (0.054 \hspace{-0.3em}) & 1923.047\\ \noalign{\smallskip} 
Ge$^+$ & 2071.275 (0.184 \hspace{-0.3em}) & 2075.501 (0.020 \hspace{-0.3em}) & 2072.113 (0.143 \hspace{-0.3em}) & 2075.479 (0.019 \hspace{-0.3em}) & 2076.137 (0.051 \hspace{-0.3em}) & 2075.085\\ \noalign{\smallskip} 
As$^+$ & 2230.704 (0.137 \hspace{-0.3em}) & 2234.199 (0.019 \hspace{-0.3em}) & 2231.664 (0.094 \hspace{-0.3em}) & 2234.165 (0.018 \hspace{-0.3em}) & 2234.536 (0.034 \hspace{-0.3em}) & 2233.766\\ \noalign{\smallskip} 
Se$^+$ & 2397.498 (0.080 \hspace{-0.3em}) & 2399.722 (0.013 \hspace{-0.3em}) & 2398.622 (0.033 \hspace{-0.3em}) & 2399.891 (0.020 \hspace{-0.3em}) & 2399.765 (0.014 \hspace{-0.3em}) & 2399.418\\ \noalign{\smallskip} 
Br$^+$ & 2571.623 (0.010 \hspace{-0.3em}) & 2572.071 (0.007 \hspace{-0.3em}) & 2572.959 (0.041 \hspace{-0.3em}) & 2572.410 (0.020 \hspace{-0.3em}) & 2571.761 (0.005 \hspace{-0.3em}) & 2571.893\\ \noalign{\smallskip} 
Kr$^+$ & 2753.535 (0.072 \hspace{-0.3em}) & 2751.537 (0.001 \hspace{-0.3em}) & 2755.166 (0.131 \hspace{-0.3em}) & 2751.572 (0.000 \hspace{-0.3em}) & 2750.878 (0.025 \hspace{-0.3em}) & 2751.567\\ \noalign{\smallskip} 

  \hline
  %\input{tabla2_re.tex}
 % \hline
  MAD & 0.251 \hspace{0.1em} & 0.133 \hspace{0.1em} & 0.202 \hspace{0.1em} & 0.153 \hspace{0.1em} & 0.121 \hspace{0.1em} & \\ \noalign{\smallskip} 

  \hline\hline
 % \input{stan_des.tex}
 % \hline\hline
\end{tabular}}
\end{minipage}
\end{center}
\end{table*}

\begin{table*}[!htbp]

\begin{center}
 \centering
  \caption{Non-interacting kinetic energy values for negative ions corresponding to the functionals  $T_{LP97}$,
$T_{LP97 + Z^3}$, $T_{LP97 + Z^9}$,     and $T_{HF}$.}
\label{Tab:0.7}
  \smallskip
  \begin{minipage}{18cm}
   % \centering
\smallskip
\scalebox{0.95}[0.90]{
\begin{tabular}{lrrrrrr}
\hline\hline
  Ions & {\hspace{0.23cm}$T_{LP97}$\footnote{$T_{LP97}$, Eq. (9) with Liu-Parr coefficients for neutral atoms}} (error\%)&
{\hspace{0.23cm}$T_{LP97}$\footnote{$T_{LP97}$, Eq. (9) with reoptimized
coefficients ($C_{T_1}$ = 3.1248445957, $C_{T_2}$ = -0.0027038794 and $C_{T_3}$ = -0.0000764926)}}
(error\%)&{\hspace{0.23cm}$T_{LP97 + Z^9}$\footnote{$T_{LP97 + Z^9}$, Eq. (18) with both  $Z^9$ and  Liu-Parr coefficients for neutral atoms}} (error\%)&{\hspace{0.23cm}$T_{LP97 + Z^9}$\footnote{$T_{LP97 + Z^9}$, Eq. (18) with $Z^9$ for neutral atoms and 
reoptimized coefficients ($C_{T_1}$ = 3.1218035155, $C_{T_2}$ = -0.0022915847 and $C_{T_3}$ =
-0.0000868695)}} (error\%)&{\hspace{0.23cm}$T_{LP97 + Z^3}$\footnote{$T_{LP97 + Z^3}$, Eq. (18) with $Z^3$ for negative ions and  reoptimized coefficients ($C_{T_1}$ = 3.1326163517, $C_{T_2}$ = -0.0040144747 and $C_{T_3}$ = -0.0000443462)}}
(error\%)& {\hspace{0.3cm}$T_{HF}$\footnote{Values of $T_{HF}$ reported
by  Clementi and Roetti\cite{3.6}}}\\
  \hline
  Li$^-$ & 7.504 (1.023 \hspace{-0.3em}) & 7.280 (1.992 \hspace{-0.3em}) & 7.512 (1.131 \hspace{-0.3em}) & 7.286 (1.912 \hspace{-0.3em}) & 7.298 (1.750 \hspace{-0.3em}) & 7.428\\ \noalign{\smallskip} 
B$^-$ & 24.386 (0.384 \hspace{-0.3em}) & 23.759 (2.945 \hspace{-0.3em}) & 24.411 (0.282 \hspace{-0.3em}) & 23.779 (2.864 \hspace{-0.3em}) & 23.784 (2.843 \hspace{-0.3em}) & 24.480\\ \noalign{\smallskip} 
C$^-$ & 37.474 (0.338 \hspace{-0.3em}) & 36.622 (2.604 \hspace{-0.3em}) & 37.513 (0.234 \hspace{-0.3em}) & 36.650 (2.529 \hspace{-0.3em}) & 36.651 (2.527 \hspace{-0.3em}) & 37.601\\ \noalign{\smallskip} 
N$^-$ & 53.835 (0.650 \hspace{-0.3em}) & 52.764 (2.626 \hspace{-0.3em}) & 53.865 (0.594 \hspace{-0.3em}) & 52.797 (2.565 \hspace{-0.3em}) & 52.799 (2.561 \hspace{-0.3em}) & 54.187\\ \noalign{\smallskip} 
O$^-$ & 74.512 (0.377 \hspace{-0.3em}) & 73.283 (2.020 \hspace{-0.3em}) & 74.599 (0.261 \hspace{-0.3em}) & 73.333 (1.953 \hspace{-0.3em}) & 73.344 (1.939 \hspace{-0.3em}) & 74.794\\ \noalign{\smallskip} 
F$^-$ & 99.489 (0.030 \hspace{-0.3em}) & 98.161 (1.305 \hspace{-0.3em}) & 99.677 (0.219 \hspace{-0.3em}) & 98.237 (1.229 \hspace{-0.3em}) & 98.265 (1.200 \hspace{-0.3em}) & 99.459\\ \noalign{\smallskip} 
Na$^-$ & 162.551 (0.431 \hspace{-0.3em}) & 160.922 (0.575 \hspace{-0.3em}) & 162.655 (0.496 \hspace{-0.3em}) & 160.981 (0.539 \hspace{-0.3em}) & 161.058 (0.491 \hspace{-0.3em}) & 161.853\\ \noalign{\smallskip} 
Al$^-$ & 243.169 (0.556 \hspace{-0.3em}) & 241.344 (0.198 \hspace{-0.3em}) & 243.099 (0.527 \hspace{-0.3em}) & 241.361 (0.191 \hspace{-0.3em}) & 241.478 (0.143 \hspace{-0.3em}) & 241.824\\ \noalign{\smallskip} 
Si$^-$ & 290.362 (0.537 \hspace{-0.3em}) & 288.550 (0.090 \hspace{-0.3em}) & 290.286 (0.511 \hspace{-0.3em}) & 288.573 (0.082 \hspace{-0.3em}) & 288.697 (0.039 \hspace{-0.3em}) & 288.810\\ \noalign{\smallskip} 
P$^-$ & 342.409 (0.502 \hspace{-0.3em}) & 340.687 (0.003 \hspace{-0.3em}) & 342.380 (0.494 \hspace{-0.3em}) & 340.732 (0.010 \hspace{-0.3em}) & 340.848 (0.044 \hspace{-0.3em}) & 340.697\\ \noalign{\smallskip} 
S$^-$ & 399.435 (0.477 \hspace{-0.3em}) & 397.873 (0.084 \hspace{-0.3em}) & 399.496 (0.493 \hspace{-0.3em}) & 397.955 (0.105 \hspace{-0.3em}) & 398.048 (0.128 \hspace{-0.3em}) & 397.538\\ \noalign{\smallskip} 
Cl$^-$ & 461.579 (0.435 \hspace{-0.3em}) & 460.266 (0.149 \hspace{-0.3em}) & 461.802 (0.483 \hspace{-0.3em}) & 460.410 (0.181 \hspace{-0.3em}) & 460.464 (0.192 \hspace{-0.3em}) & 459.580\\ \noalign{\smallskip} 
K$^-$ & 601.426 (0.378 \hspace{-0.3em}) & 600.595 (0.239 \hspace{-0.3em}) & 601.589 (0.405 \hspace{-0.3em}) & 600.681 (0.253 \hspace{-0.3em}) & 600.615 (0.242 \hspace{-0.3em}) & 599.164\\ \noalign{\smallskip} 
Sc$^-$ & 760.893 (0.158 \hspace{-0.3em}) & 760.787 (0.145 \hspace{-0.3em}) & 760.884 (0.157 \hspace{-0.3em}) & 760.784 (0.144 \hspace{-0.3em}) & 760.569 (0.116 \hspace{-0.3em}) & 759.689\\ \noalign{\smallskip} 
Ti$^-$ & 848.700 (0.038 \hspace{-0.3em}) & 849.124 (0.088 \hspace{-0.3em}) & 848.736 (0.042 \hspace{-0.3em}) & 849.139 (0.090 \hspace{-0.3em}) & 848.859 (0.057 \hspace{-0.3em}) & 848.377\\ \noalign{\smallskip} 
V$^-$ & 942.132 (0.077 \hspace{-0.3em}) & 943.120 (0.028 \hspace{-0.3em}) & 942.176 (0.072 \hspace{-0.3em}) & 943.145 (0.030 \hspace{-0.3em}) & 942.821 (0.004 \hspace{-0.3em}) & 942.859\\ \noalign{\smallskip} 
Cr$^-$ & 1041.659 (0.159 \hspace{-0.3em}) & 1043.267 (0.005 \hspace{-0.3em}) & 1041.757 (0.150 \hspace{-0.3em}) & 1043.323 (0.000 \hspace{-0.3em}) & 1042.990 (0.032 \hspace{-0.3em}) & 1043.320\\ \noalign{\smallskip} 
Mn$^-$ & 1147.102 (0.227 \hspace{-0.3em}) & 1149.267 (0.039 \hspace{-0.3em}) & 1147.103 (0.227 \hspace{-0.3em}) & 1149.297 (0.036 \hspace{-0.3em}) & 1149.002 (0.062 \hspace{-0.3em}) & 1149.713\\ \noalign{\smallskip} 
Fe$^-$ & 1258.973 (0.268 \hspace{-0.3em}) & 1261.672 (0.054 \hspace{-0.3em}) & 1258.887 (0.275 \hspace{-0.3em}) & 1261.682 (0.053 \hspace{-0.3em}) & 1261.484 (0.069 \hspace{-0.3em}) & 1262.355\\ \noalign{\smallskip} 
Co$^-$ & 1377.380 (0.286 \hspace{-0.3em}) & 1380.531 (0.058 \hspace{-0.3em}) & 1377.192 (0.300 \hspace{-0.3em}) & 1380.514 (0.059 \hspace{-0.3em}) & 1380.475 (0.062 \hspace{-0.3em}) & 1381.335\\ \noalign{\smallskip} 
Ni$^-$ & 1502.647 (0.277 \hspace{-0.3em}) & 1506.098 (0.048 \hspace{-0.3em}) & 1502.398 (0.293 \hspace{-0.3em}) & 1506.104 (0.047 \hspace{-0.3em}) & 1506.276 (0.036 \hspace{-0.3em}) & 1506.819\\ \noalign{\smallskip} 
Cu$^-$ & 1635.023 (0.240 \hspace{-0.3em}) & 1638.596 (0.022 \hspace{-0.3em}) & 1634.818 (0.253 \hspace{-0.3em}) & 1638.696 (0.016 \hspace{-0.3em}) & 1639.103 (0.009 \hspace{-0.3em}) & 1638.959\\ \noalign{\smallskip} 
Ga$^-$ & 1920.237 (0.156 \hspace{-0.3em}) & 1923.472 (0.012 \hspace{-0.3em}) & 1919.661 (0.186 \hspace{-0.3em}) & 1923.128 (0.006 \hspace{-0.3em}) & 1923.856 (0.032 \hspace{-0.3em}) & 1923.236\\ \noalign{\smallskip} 
Ge$^-$ & 2073.247 (0.101 \hspace{-0.3em}) & 2075.651 (0.015 \hspace{-0.3em}) & 2072.426 (0.141 \hspace{-0.3em}) & 2075.253 (0.004 \hspace{-0.3em}) & 2075.917 (0.028 \hspace{-0.3em}) & 2075.343\\ \noalign{\smallskip} 
As$^-$ & 2233.227 (0.040 \hspace{-0.3em}) & 2234.458 (0.015 \hspace{-0.3em}) & 2232.250 (0.084 \hspace{-0.3em}) & 2234.129 (0.000 \hspace{-0.3em}) & 2234.493 (0.016 \hspace{-0.3em}) & 2234.130\\ \noalign{\smallskip} 
Se$^-$ & 2400.688 (0.033 \hspace{-0.3em}) & 2400.011 (0.005 \hspace{-0.3em}) & 2399.491 (0.017 \hspace{-0.3em}) & 2400.042 (0.006 \hspace{-0.3em}) & 2399.888 (0.000 \hspace{-0.3em}) & 2399.897\\ \noalign{\smallskip} 
Br$^-$ & 2575.558 (0.118 \hspace{-0.3em}) & 2572.456 (0.003 \hspace{-0.3em}) & 2574.179 (0.064 \hspace{-0.3em}) & 2572.791 (0.010 \hspace{-0.3em}) & 2572.092 (0.017 \hspace{-0.3em}) & 2572.531\\ \noalign{\smallskip} 

  \hline
  %\input{tabla2_re.tex}
 % \hline
  MAD & 0.234 \hspace{0.1em} & 0.107 \hspace{0.1em} & 0.213 \hspace{0.1em} & 0.113 \hspace{0.1em} & 0.080 \hspace{0.1em} & \\ \noalign{\smallskip} 

  \hline\hline
 % \input{stan_des.tex}
 % \hline\hline
\end{tabular}}
\end{minipage}
\end{center}
\end{table*}

\subsection{Application to positive and negative ions}

We have also examined  whether the approximate functional forms discussed above  are  applicable to  positive and negative ions.  For this purpose, we list in column 2 of Table II   the values of the non-interactive kinetic energy for positive ions calculated by means of the Liu-Parr expansion given by Eq. (9) using the same optimized coefficients  as those of neutral atoms but the corresponding densities of positive ions taken from the Clementi-Roetti tables.  Similar results are presented in column 2 of Table III for the case of negative ions.  In column 3 of Tables II and III, we present the evaluation of Eq. (9) for positive and negative ions, respectively, but where in each case the coefficients of the homogeneous functional expansion have been reoptimized.

In order to assess whether the $Z^n$ approximations  for neutral atoms can be transferred to positive and negative ions, we present in column 4 of Tables II and III,  respectively, the  values of the non-interactive energy computed by means of Eq. (18) where we have  used the $Z^9$ polynomial approximation fitted for the case of neutral atoms as well as the  Liu-Parr coefficients also optimized for neutral atoms. In column 5, of Tables II and III, the same results are presented for the case of the $Z^9$ polynomial approximation for neutral atoms but where the coefficients have been reoptimized. In column 6 of Tables II and III, we present values of the non-interacting kinetic energy calculated by means of Eq. (18), but where  the $Z^3$ polynomial approximation is  fitted for each particular case and where also the coefficients of the homogeneous functional expansion have been reoptimized. In the last column in Tables II  and III,  the Hartree-Fock values of Clementi and Roetti  for the non-interacting 
kinetic energy  of positive and negative ions are listed, respectively.

It is clearly seen from the values of columns 3 of Tables II and III that the Liu-Parr homogeneous functional expansion works quite well for  positive and negative atomic ions, respectively. In the case positive ions, the MAD value for the relative percent error is 0.133  ; for negative ions, 0.107.  In both of these cases, the accuracy increases for ions with large atomic number.  The transferability of  the $Z^n$ polynomial approximation for neutral atoms as well as the use of the Liu-Parr coefficients (optimized for neutral atoms) is examined in column 4 of Tables II and III. The results show a MAD value of  0.202 and 0.213 for positive and negative ions, respectively.  Again, we see that the approximation improves with increasing atomic number.  These results are, in fact, quite comparable to those of the original  Liu-Parr expression (Eq. (9)).  Naturally, the best fit is obtained when both the $Z^n$ function and the coefficients  have been optimized for the ions.  The MAD results in this case are 0.121 
and 0.080, respectively, for positive and negative ions.

\subsection{Extensions to molecular systems and clusters}

We base the present discussion on the fact, ascertained previously,\cite{1.5} that to a good approximation the kinetic energy  enhancement factor for two neighboring interacting  atoms is given by the sum of the atomic enhancement factors of the participating atoms.  Let us consider an electronic system, a molecule or a cluster which consists of M atoms.  We assume that the whole space can be divided into M subvolumes $\{\Omega_A\}_{A=1, . . . ,M}$ ,
each one of them corresponding to a given atom $A$. For, example, for the system composed by the same atoms the space
can be divided in the following way:
\begin{equation}
{\vec {r}}\in \Omega_A  \quad {\rm if} \quad {\rm min} \vert {\vec {r}}-{\vec {R}}_B\vert = \vert {\vec {r}}-{\vec {R}}_A\vert 
\end{equation}
where the vectors  ${\vec {R}}_A$ and ${\vec {R}}_B$ denote the nuclear positions.  Thus, 
${\mathbf{R}}^3=\bigcup_{A=1}\sp{M}\Omega_A$.

The one-particle density for a system  formed by $M$ atoms is $\rho({\vec {r}})\equiv \rho({\vec {r}}, \{{\vec {R}}_A\}_{A=1, . . . ,M})$, where we make explicit the presence of nuclear coordinates corresponding to the fixed atoms.
Let us define 
\begin{equation}
\rho_A({\vec {r}})\equiv \rho({\vec {r}}\in\Omega_A)
\label{}
\end{equation}
Clearly, the notion that $\rho_A({\vec { }r})$ is an atomic density is not implicit in this definition as the one-particle density $\rho({\vec {r}})$ corresponds to the whole system.  The definition of $\rho_A({\vec {r}})$ does imply, however, that the value of the molecular or cluster density is that associated  to a particular volume  $\Omega_A$.

Bearing in mind these considerations we see that we may write the second term of Eq. (1) as
\begin{eqnarray}
& &
\frac{1}{2}\int_{{\mathbf{R}}^3}
d\vec r\rho^{\nicefrac{5}{3}}(\vec r)A_N[\rho(\vec r);\vec r]\nonumber \\ &=&
\frac{1}{2}\sum_{A=1}\sp{M}\int_{\Omega_A}
d\vec r\rho_{A}^{\nicefrac{5}{3}}(\vec r)A_N[\rho_{A}(\vec r);\vec r]
\label{}
\end{eqnarray}

In the present context the plausibility of this separation stems from the fact that we have defined  enhancement factors 
which are associated to a given atom, or atomic region,  and which yield satisfactory results for the non-interacting kinetic energy when the charge of the neutral species is either increased yielding a negative ion, or decreased giving a positive one.
There is some indirect  evidence, however,  that the Liu-Parr  expansion and approximation given by Eqs. (1) and (12) should work
for molecules without partition of the whole space into atomic subvolumes.\cite{3.4,Liu-et-al-1996}
%\bibitem{Liu-et-al-1996}
%Liu, S., Süle, P., Lopez-Boada, R., & Nagy, A. (1996). Applications to atoms, ions, and molecules of a novel form of the %correlation energy density functional. Chemical physics letters, 257(1), 68-74.

But, certainly,  a division into subvolumes is required if one uses the polynomial representation
Eq. (17) for $A_N$.
Application of these ideas to molecules and clusters will be dealt with elsewhere.

\section{Conclusions.}

We have explored in the present work the possibility  of expressing the enhancement factor $A_N[\rho({\vec r}); {\vec r}]$  of the non interacting kinetic energy functional solely as a function of the one-particle density $\rho({\vec r })$, dispensing thereby with the more usual representations of this term based on gradient expansions.  This was done by adopting the representation given by Liu and Parr in terms of power series of the density.

We have analyzed the behavior of  this approximate expression for  $A_N[\rho({\vec r}); {\vec r}]$ in the case of first, second and third row atoms (with the exception of H and He, whose non-interacting kinetic energy functional is exactly given by the Weizsacker term). It is seen that the  expression for $A_N[\rho({\vec r}); {\vec r}]$ given by Eq. (10) does not comply with  the requirement of positivity.  However, when a local correction term in the form of
$\lambda\nabla^2\rho(\vec r)/\rho(\vec r)^{\nicefrac{5}{3}}$ is added to this expression,  we obtain profiles
which are in excellent agreement with those of the enhancement factors derived  from an orbital
representation. More specifically, for the second row atoms Na, Al and Ar, the locations and heights of the maxima generated by the $\lambda$-dependent approximate  $A_N[\rho({\vec r}); {\vec r}]$  (Eq. (12)) fully coincide with those obtained from an orbital representation of the  enhancement factor (Eq. (4)).  In the case of the third row atoms Fe and Ni, although the location is in perfect agreement, the maxima corresponding to the third shell fall below those of  exact ones.  An exception is the Kr atom, where the coincidence both in location and hight is quite good.  The asymptotic behavior of these $\lambda$-dependent functions near the nucleus shows a negative divergence in all cases studied. At large distances from the nucleus for the Fe and Ni atoms we observe a divergence; in all other cases the asymptote follows the trend of the exact enhancement factor. 

In addition, we have explored the possibility of introducing a $Z$-dependent
approximation to represent  the integrals $\int d\vec r\ \rho^{\nicefrac{4}{3}}(\vec r)$ and $\int d\vec r\
\rho^{\nicefrac{11}{9}}(\vec r)$ of the enhancement factor, Eq. (\ref{eq:0.7}),  through the
polynomials (\ref{eq:p1}) - (\ref{eq:p2}), respectively.  The non-interacting atomic kinetic energy density functionals generated from these new $Z$-dependent  enhancement factors (Eq.(\ref{eq:6.3})) show a behavior that very closely resembles that of the Liu-Parr functional $T_{LP97}$. From a comparison of  the MAD values for $T_{LP97}$  (0.222), $T_{LP97 + Z^3}$ (0.329), and $T_{LP97 + Z^9}$ (0.224), where all these functionals are evaluated with the Liu-Parr optimized coefficients, we see  that the functional $T_{LP97 + Z^9}$ performs as well as the Liu-Parr functional $T_{LP97}$.  However, when we re-optimize the coefficients of the $Z$-dependent functionals, we obtain the MAD values of 0.129 and 0.166 for   $T_{LP97 + Z^3}$ and $T_{LP97 + Z^9}$, respectively, thus showing a closer accord with the exact Hartree Fock values.  The behavior of the approximate enhancement factors in the case of the $Z$-dependent functionals is undistinguishable from that of the $\lambda$-corrected Liu-Parr functionals.

Concerning the extension of the $Z-\lambda$-representation of neutral atoms to positive and negative ions we see, from Tables II and III, that the  non-interacting atomic kinetic energy density functionals generated from these new $Z-\lambda$-dependent  enhancement factors perform quite well, even in the case when we use the same $Z^n$ functions  as well as the Liu-Parr coefficients which were adjusted for neutral atoms.

Summing up,  based both on the Liu-Parr power density expansion and on  the replacement of some of the integrals of this expansion by $Z$-dependent functions,  a very simple form for the non-interacting kinetic energy enhancement factor has been found.  The corresponding functionals, which bypass the usual gradient expansion representation,  lead to non-interacting kinetic energy values which closely approximate  (as measured by the MAD values) the exact ones calculated from Hartree Fock wave functions.  Moreover, the additivity  of the atomic enhancement factors, opens a possible way for  extending the present results to molecules and clusters.

\section*{Acknowledgement}

EVL would like to gratefully acknowledge SENESCYT of Ecuador for allowing him  the opportunity to participate in the Prometheus Program. VVK was supported by the US Department  of Energy TMS program, grant DE-SC0002139.

\renewcommand{\theequation}{A\arabic{equation}}
 % redefine the command that creates the equation no.
\setcounter{equation}{0}  % reset counter

\section{Appendix A. Theorem 2 of Liu and Parr revisited}
For ease of understanding of general readers we have included some taken-for-granted  lines left out  in the original proof   of Theorem 2 of  Liu and Parr.

\setcounter{Teo}{1}
\begin{Teo} (S. Liu and R. G. Parr \cite{3.4}).
Given the functional
  \begin{equation}
    Q_j[\rho]=C_j[H_j]^j,\label{C4:4.10}
  \end{equation}
where  $H_j$ is a homogeneous and local functional, if it is
homogeneous of degree $m$ in coordinate scaling, it takes the
form
  \begin{equation}
    Q_j[\rho]=C_j\left[\int d\vec r \rho^{[1+(\nicefrac{m}{3j})]}(\vec r)\right]^j.\label{C4:4.11}
  \end{equation}
Further, if $Q_j[\rho]$ is homogeneous of degree $k$ in density
scaling, $j$ is determined by the relation
\begin{equation}
  j=k-\frac{m}{3}\label{C4:4.12}
\end{equation}
\end{Teo}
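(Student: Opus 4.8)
The plan is to exploit the two homogeneity hypotheses in sequence, extracting first the functional form of $H_j$ from the coordinate scaling and then the constraint on $j$ from the density scaling. Since $Q_j[\rho]=C_j[H_j]^j$ is assumed homogeneous of degree $m$ under coordinate scaling and $C_j$ is a fixed constant, the factor $H_j$ must itself be homogeneous of degree $m/j$: writing the density-conserving coordinate scaling $\rho_\lambda(\vec r)=\lambda^3\rho(\lambda\vec r)$, one has $H_j[\rho_\lambda]=\lambda^{m/j}H_j[\rho]$. The locality hypothesis then lets me write $H_j[\rho]=\int d\vec r\, f(\rho(\vec r))$ for a single function $f$, so the whole problem reduces to pinning down $f$.

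First I would insert this local form into the scaling identity and change variables $\vec u=\lambda\vec r$, obtaining $\int d\vec u\, f(\lambda^3\rho(\vec u))=\lambda^{(m/j)+3}\int d\vec u\, f(\rho(\vec u))$, valid for every admissible $\rho$. The crucial move is to pass from this integral identity to a pointwise one. I would do this by differentiating with respect to $\lambda$ at $\lambda=1$, which produces $\int f'(\rho)\,(3\rho+\vec r\cdot\nabla\rho)\,d\vec r=(m/j)\int f(\rho)\,d\vec r$; integrating the $\vec r\cdot\nabla\rho$ term by parts, with the surface term discarded for a decaying density, reduces this to $3\int \rho\, f'(\rho)\,d\vec r=\big((m/j)+3\big)\int f(\rho)\,d\vec r$. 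Invoking the arbitrariness of the density profile, the integrands may be equated to give the pointwise Euler relation $3t\,f'(t)=\big((m/j)+3\big)f(t)$.

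Solving this separable ODE yields $f(t)=C\,t^{\,1+m/(3j)}$, hence $H_j[\rho]=C\int d\vec r\,\rho^{\,1+m/(3j)}(\vec r)$; absorbing $C^j$ into the constant $C_j$ reproduces the closed form asserted in the statement. For the second claim I would simply feed this explicit expression into the density scaling $\rho\to\zeta\rho$: because the integrand carries the exponent $1+m/(3j)$, one finds $Q_j[\zeta\rho]=\zeta^{\,j(1+m/(3j))}Q_j[\rho]=\zeta^{\,j+m/3}Q_j[\rho]$, so $Q_j$ is homogeneous of degree $j+m/3$ in density scaling. Equating this to the assumed degree $k$ gives $k=j+m/3$, i.e. $j=k-m/3$, as claimed.

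The main obstacle is the step that upgrades the global (integrated) homogeneity relation to the local ODE for $f$: strictly, this requires that the identity hold for a family of densities rich enough to permit equating integrands pointwise, together with the vanishing of the boundary term in the integration by parts. I would therefore state explicitly the decay and regularity assumptions on $\rho$ and justify the pointwise reduction by the arbitrariness of the density, rather than treating it as automatic. The remaining steps — solving the first-order ODE and carrying out the density-scaling bookkeeping — are routine.
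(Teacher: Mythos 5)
Your proof is correct, and it reaches the paper's pointwise ODE by a somewhat more elementary route than the paper itself. The paper works inside Liu--Parr's functional-derivative formalism: its key lemma is the identity for strictly local functionals, $L[\rho]=-\frac{1}{3}\int d\vec r\,\vec r\cdot\nabla\rho(\vec r)\,\frac{\delta L[\rho]}{\delta\rho(\vec r)}$, which it combines with the infinitesimal statement of coordinate-scaling homogeneity, $-\int d\vec r\,\rho\,\vec r\cdot\nabla\frac{\delta Q_j}{\delta\rho}=mQ_j$, and one integration by parts, to obtain $\int d\vec r\,\rho\,\frac{\delta Q_j}{\delta\rho}=\frac{m+3j}{3}Q_j$; it thus reads off $k=j+m/3$ from the structure alone, \emph{before} determining $f_j$, and only afterwards substitutes $H_j=\int d\vec r\, f_j(\rho)$ to get the separable equation $\rho f_j'=\left(1+\frac{m}{3j}\right)f_j$. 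You never invoke that lemma: by differentiating the global scaling relation $H_j[\rho_\lambda]=\lambda^{m/j}H_j[\rho]$ at $\lambda=1$ and integrating by parts, you in effect re-derive it, which makes your argument more self-contained; the price is that you obtain $k=j+m/3$ only a posteriori, from the explicit power law, rather than as a structural consequence --- a harmless reordering. Both proofs share the one genuinely delicate step, which you rightly flag: passing from $3\int\rho f'(\rho)\,d\vec r=\left(\frac{m}{j}+3\right)\int f(\rho)\,d\vec r$ to the pointwise ODE by ``arbitrariness of $\rho$'' (the paper's ``if the two integrals are equal it follows'' is exactly as terse; the justification needs densities of arbitrary normalization, e.g.\ mollified plateaus $\rho\approx t_0\chi_\Omega$ with $|\Omega|$ variable, which the density-scaling setting does permit). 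Two minor points to clean up: after your change of variables, differentiating $\int f(\lambda^3\rho)\,d\vec u=\lambda^{(m/j)+3}\int f(\rho)\,d\vec u$ at $\lambda=1$ gives $3\int\rho f'(\rho)=\left(\frac{m}{j}+3\right)\int f(\rho)$ directly, whereas your intermediate formula containing $\vec r\cdot\nabla\rho$ belongs to differentiating the identity \emph{before} the change of variables --- both routes agree after integration by parts, but you should present one consistently; and $H_j[\rho_\lambda]=\lambda^{m/j}H_j[\rho]$ holds a priori only up to a $j$-th root of unity when $j$ is even, the correct branch being fixed by continuity in $\lambda$ at $\lambda=1$, which deserves a sentence.
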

\begin{proof}
It is known that any strictly local functional $L[\rho]$ satisfies the identity
  \begin{equation}
    L[\rho] = -\frac{1}{3}\int d\vec r\ \vec r\cdot\nabla\rho(\vec r)\frac{\delta L[\rho]}{\delta\rho(\vec r)}
  \end{equation}
Taking the functional derivative of (\ref{C4:4.10}) with respect to $\rho$, i.e:
\begin{equation}
  \frac{\delta Q_j[\rho]}{\delta\rho} = C_j j H_j[\rho]^{j-1}\frac{\delta H_j[\rho]}{\delta\rho}\label{eq:4:14}
\end{equation}
and rewriting   Eq. (\ref{C4:4.10}), we have
\begin{align}
  Q_j[\rho]=&\ C_j(H_j[\rho])^j\nonumber\\[10pt]
  =&\ C_jH_j[\rho](H_j[\rho])^{j-1}\nonumber\\[10pt]
  =&\ C_j\left(-\frac{1}{3}\int d\vec r\ \vec r\cdot\nabla\rho(\vec r)\frac{\delta H_j[\rho]}{\delta\rho(\vec r)}\right)(H_j[\rho])^{j-1}\nonumber\\[10pt]
  =&\ -\frac{1}{3j}\int d\vec r\ \vec r\cdot\nabla\rho(\vec r)C_jj(H_j[\rho])^{j-1}\frac{\delta H_j[\rho]}{\delta\rho(\vec r)}\nonumber\\[10pt]
  =&\ -\frac{1}{3j}\int d\vec r\ \vec r\cdot\nabla\rho(\vec r)\frac{\delta Q_j[\rho]}{\delta\rho(\vec r)}.\label{C4:13}
\end{align}
Because $Q_j$ is homogeneous of degree $m$ in coordinate scaling it follows that
  \begin{equation}
    -\int d\vec r\ \rho(\vec r)\vec r\cdot\nabla\frac{\delta Q_j[\rho]}{\delta\rho(\vec r)} = m Q_j[\rho].\label{C4:14}
  \end{equation}
Thus, if we integrate this equation by parts, we obtain
 \begin{align}
    -\int d\vec r\ \rho(\vec r)\vec r\cdot\nabla\frac{\delta Q_j[\rho]}{\delta\rho(\vec r)} =& \int d\vec r\ (\vec r \cdot \nabla\rho(\vec r)\nonumber\\
     &\ + 3\rho(\vec r))\frac{\delta Q_j[\rho]}{\delta\rho(\vec r)}\label{C4:17}
  \end{align}
and by replacing  Eq. (\ref{C4:13}) into   (\ref{C4:17}), it is found
  \begin{equation}
    \int d\vec r\ \rho(\vec r)\frac{\delta Q_j[\rho]}{\delta\rho(\vec r)} = \frac{m+3j}{3} Q_j[\rho].\label{C4:15}
  \end{equation}
This shows that $Q_j[\rho]$ is homogeneous of degree $(m+3j)/3$ in coordinate scaling. On the other hand, $H_j[\rho]$ is homogeneous, i.e.
\begin{equation}
  H_j[\rho] = \int d\vec r\ f_j(\rho(\vec r)),\label{C4:19}
\end{equation}
so that if we replace Eq. (\ref{C4:19}) into Eqs. (\ref{C4:4.10}) and (\ref{eq:4:14}) and these in turn into Eq.  (\ref{C4:15}) we have
\begin{align}
  \int d \vec r\ \rho(\vec r) j C_j (H_j[\rho])^{j-1}\frac{\delta H_j[\rho]}{\delta\rho(\vec r)} = & \frac{m + 3j}{3}C_j(H_j[\rho])^j\nonumber\\[10pt]
  \int d \vec r\ \rho(\vec r)\frac{\delta H_j[\rho]}{\delta\rho(\vec r)} = & \left(1+ \frac{m}{3j}\right)(H_j[\rho])\nonumber\\[10pt]
  \int d \vec r\ \rho(\vec r)\frac{d f_j(\rho)}{d\rho(\vec r)} = & \left(1+ \frac{m}{3j}\right)\int d\vec r\ f_j(\rho)\nonumber\\[10pt]
  \int d \vec r\ \rho(\vec r)\frac{d f_j(\rho)}{d\rho(\vec r)} = & \int d\vec r\ \left(1+ \frac{m}{3j}\right)f_j(\rho).
\end{align}
If the two integrals are equal it follows:
\begin{equation}
  \rho(\vec r)\frac{d f_j(\rho)}{d\rho(\vec r)} = \left(1+ \frac{m}{3j}\right)f_j(\rho).
\end{equation}
Therefore, we have to solve a simple differential equation
\begin{align}
  \int\frac{df_j(\rho)}{f_j(\rho)} =&\ \int\left(1+ \frac{m}{3j}\right)\frac{d\rho(\vec
r)}{\rho(\vec r)}\nonumber\\[10pt]
  \ln f_j(\rho) =&\ \left(1+ \frac{m}{3j}\right)\ln\rho(\vec r) + C_j\nonumber\\[10pt]
  f_j(\rho) = &\ C_j\rho^{[1 + \nicefrac{m}{3j}]}(\vec r),
\end{align}
where $C_j$ is a constant of integration. This leads to:
\begin{equation}
  H_j[\rho] = C_j\int d\vec r\ \rho^{[1 + \nicefrac{m}{3j}]}(\vec r)
\end{equation}
and
\begin{equation}
  Q_j[\rho] = C_j\left[\int d\vec r\ \rho^{[1 + \nicefrac{m}{3j}]}(\vec r)\right]^j.
\end{equation}
Finally, we see from Eq. (\ref{C4:15}) that $k$ is $(m + 3j)/3$, thus
  \begin{equation}
    j = k - \frac{m}{3}.
  \end{equation}
\end{proof}
\bibliographystyle{apsrev4-1}
% \bibliographystyle{chem-journal}
%\clearpage
\addcontentsline{toc}{chapter}{Referencias}
\def\bibname{Referencias}
\bibliography{base1}

\end{document}